\newcommand{\lv}[1]{}
\newcommand{\veps}{\varepsilon}
\newcommand{\Cst}{{C^\star}}
\newcommand{\X}{\mathcal{X}}
\newcommand{\cpst}[1]{c^{\star}_{#1}}
\newtheorem{fact}{Fact}
\newcommand{\fair}[1]{\textup{fair-cost($#1$)}}
\newcommand{\costt}[1]{\textup{cost($#1$)}}
\newcommand{\RS}{{\tt Randomized-Subroutine}}
\newcommand{\rp}[2]{#1\strut^{\hspace{0.15mm}#2}}
\newcommand{\ep}[2]{#1\strut^{\,#2}}
\newcommand\z{\scaleobj{0.85}{z}}
\newcommand\zl{\scaleobj{1.15}{z}}
\begin{document}

\title{Tight FPT Approximation for Socially Fair Clustering}
%
%
\author{Dishant Goyal \and Ragesh Jaiswal}
%
%
%
 \institute{
Department of Computer Science and Engineering, \\
Indian Institute of Technology Delhi.\thanks{Email addresses: \email{\{Dishant.Goyal, rjaiswal\}@cse.iitd.ac.in}}
}
{\def\addcontentsline#1#2#3{}\maketitle}

\begin{abstract}
In this work, we study the {\em socially fair $k$-median/$k$-means problem}. We are given a set of points $P$ in a metric space $\X$ with a distance function $d(.,.)$. There are $\ell$ groups: $P_1,\dotsc,P_{\ell} \subseteq P$. 
We are also given a set $F$ of feasible centers in $\X$. 
The goal in the socially fair $k$-median problem is to find a set $C \subseteq F$ of $k$ centers that minimizes the maximum average cost over all the groups. That is, find $C$ that minimizes the objective function $\Phi(C,P) \equiv \max_{j} \Big\{ \sum_{x \in P_j} d(C,x)/|P_j| \Big\}$, where $d(C,x)$ is the distance of $x$ to the closest center in $C$. The socially fair $k$-means problem is defined similarly by using squared distances, i.e., $d^{2}(.,.)$ instead of $d(.,.)$. 
The current best approximation guarantee for both the problems is $O\left( \frac{\log \ell}{\log \log \ell} \right)$ due to Makarychev and Vakilian~\cite{fairness:2021_Socially_Makarychev}. 
In this work, we study the fixed parameter tractability of the problems with respect to parameter $k$. We design $(3+\veps)$ and $(9 + \veps)$ approximation algorithms for the socially fair $k$-median and $k$-means problems, respectively, in FPT (fixed parameter tractable) time $f(k,\veps) \cdot n^{O(1)}$, where  $f(k,\veps) = (k/\veps)^{{O}(k)}$ and $n = |P \cup F|$. Furthermore, we show that if Gap-ETH holds, then better approximation guarantees are not possible in FPT time.
\end{abstract}

\section{Introduction}\label{section:introduction}
\textit{Clustering} is a task of grouping the objects such that the objects within the same group are more similar to each other than the objects in the different groups. Clustering has been a well-studied topic. It has many mathematical formulations, heuristics, approximation algorithms, and a wide range of known applications (see~\cite{survey:survey_on_all_algorithms_2015} and~\cite{survey:akjain} for a brief survey). 
In recent years, the topic: \emph{fairness} in \emph{machine learning}, has gained considerable attention leading to its own dedicated conference: ACM FAccT (see~\cite{survey:2019_fairness_in_ML} and~\cite{survey:2020_fairness_Chouldechova} for the recent developments in this area). 
The main motivation is that in many human centric applications, the input data is biased towards a particular demographic group that may be based on age, gender, ethnicity, occupation, nationality, etc.
We do not want algorithms to discriminate among different groups due to biases in the dataset. 
In other words, we aim to design \emph{fair} algorithms for problems.

In the context of clustering, in particular the $k$-median/$k$-means/$k$-center clustering, various notions of fair clustering have recently been proposed (see for example:~\cite{fairness:2017_Chierichetti_NIPS,fairness:2019_Bera_NIPS,fairness:2019_Ahmadian,fairness:2019_Bercea_Schmidt_APPROX,fairness:2019_pranjal_kcenter,fairness:2019_XChen_Proportional,fairness:2020_Mahabadi_Vakilian}). Most of these notions are based on \emph{balanced} or \emph{proportionality} clustering. In other words, a clustering is said to be fair if in every cluster, a protected group (e.g. demographic group) occurs in an almost the same proportion as it does in the overall population. By the virtue of this, no group is over-represented or under-represented in any cluster. However, recently, Abbasi~\emph{et al.}~\cite{fairness:2021_Socially_Abbasi} demonstrated that ``balance'' based clustering is not desirable in applications where a cluster center represents an entire cluster. One such application is the placement of polling location for voting (see~\cite{fairness:2021_Socially_Abbasi} for details). In such applications, the quality of representation of a group is determined by the closeness of the group members to their cluster centers. Such cost representation is not captured by ``balance'' based clustering. Therefore, they introduced a new notion of the fair clustering where each group has an equitable cost representation in the clustering.
Informally, given a point set $P$ and $\ell$ groups: $P_{1},\dotsc,P_{\ell} \subseteq P$, the task is to cluster $P$ into $k$ clusters such that the maximum of the average costs of the groups is minimized.
In an independent work, Ghadiri~\emph{et al.}~\cite{fairness:2020_Socially_Ghadiri} used a similar notion that they called the ``socially fair'' clustering problem. Recently, Makarychev and Vakilian~\cite{fairness:2021_Socially_Makarychev} generalized the definition of the socially fair clustering problem using the weighted point set. The following is a formal definition of the problem as stated in~\cite{fairness:2021_Socially_Makarychev}.
\begin{definition}[Socially Fair Clustering]
We are given a set $P$ of points and set $F$ of feasible centers in a metric space $(\X,d)$. There are $\ell$ groups (possibly non-disjoint) of points $P_1, ..., P_{\ell} \subseteq P$ with weight function $w_j \colon P_j \to \mathbb{R}^{+}$ for each $j \in \{1,\dotsc,\ell\}$. Let $z$ be any real number $\geq 1$.
The unconstrained cost of a group $P_j$ with respect to a center set $C \subseteq F$ is defined as:
\[
\textup{cost}(C, P_j) \equiv \sum_{p \in P_j} d(C, p)^{\zl} \cdot w_j(p), \quad \textrm{where $d(C, p) \coloneqq \min_{c \in C} \big\{d(c, p)\big\}$}.
\]
In socially fair clustering, the goal is to pick a center set $C \subseteq F$ of size $k$ so as to minimize the objective function: $\max_{j \in [\ell]} \costt{C,P_j}$, which we call the fair cost:
\[
\fair{C,P} \equiv \max_{j \in [\ell]} \Big\{ \costt{C, P_j} \Big\}.
\]
\end{definition}

\noindent The case of averaging the cost of each group, i.e., $w_{j}: P_{j} \to 1/|P_j| $, was initially studied by Ghadiri~\emph{et al.}~\cite{fairness:2020_Socially_Ghadiri} and Abbasi~\emph{et al.}~\cite{fairness:2021_Socially_Abbasi} for which they gave an $O(\ell)$-approximation algorithm with polynomial running time. Furthermore, Abbasi~\emph{et al.}~\cite{fairness:2021_Socially_Abbasi} showed that the natural LP relaxation of the problem has an integrality gap of $\Omega(\ell)$. To overcome this barrier, Makarychev and Vakilian~\cite{fairness:2021_Socially_Makarychev} designed a strengthened LP and improved the approximation guarantee to $O\left( e^{O(z)} \cdot \frac{\log \ell}{\log \log \ell} \right)$ in polynomial time.  

Note that in the definition of the socially fair clustering, it is given that groups might not be disjoint. However, we can make the groups disjoint. If a point $p$ appears in multiple groups say $P_{j_{1}},\dotsc,P_{j_{t}}$, then we create $t$ copies of point $p$ such that its $i^{th}$ copy only belongs to $j_{i}^{th}$ group. Moreover, the weight of the $i^{th}$ copy is $w_{j_{i}}(p)$. The objective function does not change due to this modification. Therefore, from now on, we will assume all the groups to be disjoint. 

Now let us discuss some special cases of the problem. For $z = 1$ and $z = 2$, the problem is known as ``socially fair $k$-median'' and ``socially fair $k$-means'' problem,  respectively. On the other hand, if $z$ is arbitrary and $\ell = 1$, the problem is known as the ``$k$-service'' problem. Furthermore, in addition to $\ell = 1$, if $z = 1$ or $z = 2$, the problem becomes the classical (unconstrained) $k$-median/$k$-means problem, respectively. 

In this work, we study the fixed parameter tractability of the problem parameterized by $k$. It is known that the classical $k$-median and $k$-means problems when parameterized by $k$, are $\mathsf{W[2]}$-hard~\cite{fpt:2019_vincent}. Hence, it straightaway implies $\mathsf{W[2]}$-hardness of the socially fair clustering problem. Therefore, the problem does not admit an exact FPT algorithm unless $\mathsf{W[2]} = \mathsf{FPT}$. In this work, we design a $(\rp{3}{\z}+\veps)$-approximation algorithm for the problem, with FPT running time, parameterized by $k$. Furthermore, we show that this approximation guarantee is tight upto an $\veps$ additive factor.
Also, note that in the running time analysis of our algorithms, we ignore the dependence on $z$ since it is typically considered as constant. Formally, we state the main result as follows:

\begin{theorem} [Main Theorem]\label{theorem:main_theorem}
Let $z \geq 1$ and $0 \leq \veps \leq 1$. Let $\mathcal{I} = (\X,P,P_1,\dotsc,P_{\ell},w_1,\dotsc,w_{\ell},$ $F,d,k,z)$ be any instance of the socially fair clustering problem. Then, there is a randomized algorithm that outputs a $(\rp{3}{\z}+\veps)$-approximate solution to $\mathcal{I}$ with probability at least $1-1/n$. The running time of the algorithm is $\left( k/\veps \right) ^{O(k)} \cdot n^{O(1)}$, which is FPT in $k$.
\end{theorem}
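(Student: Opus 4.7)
The plan is to reduce the socially fair problem to a \emph{list $k$-service} primitive and exploit a purely geometric per-cluster guarantee to control the max-over-groups cost.

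\emph{Per-cluster reduction yielding $\rp{3}{\z}$.} Fix any optimal solution $C^{\star}=\{c_{1}^{\star},\ldots,c_{k}^{\star}\}\subseteq F$ with clusters $X_{i}^{\star}=\{p\in P:c_{i}^{\star}\in\arg\min_{c\in C^{\star}}d(c,p)\}$ and set $\OPTT=\fair{C^{\star},P}$. Suppose for each $i$ we can find $p_{i}^{\star}\in X_{i}^{\star}$ that is the (approximately) nearest point in $X_{i}^{\star}$ to $c_{i}^{\star}$, and let $c_{i}\in F$ be the nearest feasible center to $p_{i}^{\star}$. Since $c_{i}^{\star}\in F$, $d(c_{i},p_{i}^{\star})\leq d(c_{i}^{\star},p_{i}^{\star})$, so the triangle inequality yields $d(c_{i},q)\leq 2\,d(c_{i}^{\star},p_{i}^{\star})+d(c_{i}^{\star},q)\leq 3\,d(c_{i}^{\star},q)$ for every $q\in X_{i}^{\star}$. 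Raising to the $z$-th power, summing over $q\in X_{i}^{\star}\cap P_{j}$, and then over $i$, we obtain $\costt{C,P_{j}}\leq\rp{3}{\z}\costt{C^{\star},P_{j}}\leq\rp{3}{\z}\OPTT$ for every group $j$, whence $\fair{C,P}\leq\rp{3}{\z}\OPTT$. When $p_{i}^{\star}$ only lies in the top $\Theta(\veps/k)$-closest fraction of $X_{i}^{\star}$ to $c_{i}^{\star}$, the loss is absorbed into $\rp{3}{\z}+\veps$ via the standard inequality $(a+b)^{z}\leq(1+\tau)^{z-1}a^{z}+(1+1/\tau)^{z-1}b^{z}$ (with $\tau=1/2$ the leading coefficients combine to exactly $\rp{3}{\z}$).

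\emph{The list $k$-service primitive $\LK$.} We design $\LK$ that, given any weighted single-group instance and any \emph{target} set $C^{\star}\subseteq F^{k}$ (not necessarily the optimum of that instance), outputs a list $\mathcal{L}$ of $(k/\veps)^{O(k)}$ $k$-subsets of $F$ such that with probability $\geq 1-1/n$ some $C\in\mathcal{L}$ satisfies the per-cluster closeness from the previous paragraph. Internally, $\LK$ invokes the subroutine $\RS$ that performs $k$ rounds of $D^{z}$-sampling with branching: starting from $C_{0}=\emptyset$, at round $t$ it draws $O(k/\veps)$ points proportional to $d(p,C_{t-1})^{z}w(p)$ and spawns one branch per sampled point, appending its nearest $F$-neighbor. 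Adapting the $k$-means/$k$-median FPT analysis, at each round, conditional on a branch being correct at earlier rounds, the probability of sampling from the top $\Theta(\veps/k)$-closest fraction of an unprocessed target cluster is $\Omega(\veps/k)$; over $k$ rounds and $(k/\veps)^{O(k)}$ branches, the success probability exceeds $1-1/n$.

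\emph{Assembling the socially fair algorithm.} The key observation is that per-cluster closeness is a \emph{geometric} property of $C$ relative to the fair optimum $C^{\star}$ and is oblivious to whether groups are aggregated by sum or by max. Hence any $C\in\mathcal{L}$ with per-cluster closeness to $C^{\star}$ is automatically a $(\rp{3}{\z}+\veps)$-approximation of the fair cost. We invoke $\LK$ on the union instance $(\bigsqcup_{j}P_{j},F,\bigsqcup_{j}w_{j},d,k,z)$ (the groups are disjoint after the preprocessing in the introduction) and return $\arg\min_{C\in\mathcal{L}}\fair{C,P}$; the running time is $|\mathcal{L}|\cdot n^{O(1)}=(k/\veps)^{O(k)}\cdot n^{O(1)}$.

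\emph{Main obstacle.} The delicate point is ensuring that the sampling inside $\LK$ actually sees every fair-OPT cluster: a cluster $X_{i}^{\star}$ whose contribution is tiny in the union cost yet comparable to $\OPTT$ in a single max-witnessing group could be missed by oblivious $D^{z}$-sampling on the union. We expect to handle this either by (i) guessing $\OPTT$ on an $O(\log n/\veps)$ geometric grid and rescaling each $w_{j}$ so that every group contributes $\Theta(1)$ to the rescaled union (making the sampling group-balanced), or (ii) replacing the oblivious union-sampling by an iterative scheme that at each round identifies the currently worst group $j^{\star}=\arg\max_{j}\costt{C_{t-1},P_{j}}$ and calls $\RS$ on $P_{j^{\star}}$, with the branching over $j^{\star}$ absorbed into the per-round $O(k/\veps)$ budget. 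Either route preserves the $(k/\veps)^{O(k)}\cdot n^{O(1)}$ bound and the success probability $1-1/n$ after standard amplification.
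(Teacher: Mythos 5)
Your exact-nearest-point step is sound: if $p_i^\star$ is the true closest point of $X_i^\star$ to $c_i^\star$ and $c_i$ is its nearest feasible center, then $d(c_i,q)\le 3\,d(c_i^\star,q)$ holds pointwise, hence for every group simultaneously. The gap is everything after that. Once you relax to an ``approximately nearest'' $p_i^\star$ produced by $D^{\zl}$-sampling on the union, the additive loss for group $j$ is of order $\sum_i w_j(X_i^\star\cap P_j)\, d(c_i^\star,p_i^\star)^{\zl}$, and a sampling guarantee stated with respect to the union weights bounds $d(c_i^\star,p_i^\star)$ only against the union mass of cluster $i$, not against what each individual group can afford; different groups have incomparable weight scales inside the same cluster, so the claim that the loss is ``absorbed into $\rp{3}{\z}+\veps$ via the standard inequality'' is exactly the point where the single-group FPT analysis fails to transfer, and it is asserted, not proved. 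Worse, as you yourself note, entire optimal clusters that are negligible in the union but decisive for the max-attaining group may never be hit. Neither proposed patch closes this. For (i): to lower-bound the sampling mass of a ``not yet well-served'' cluster you must measure it against the whole (rescaled) union, which aggregates $\Theta(\ell)$ groups; the per-sample hitting probability degrades to $\Omega(\veps/(k\ell))$ and the branching budget becomes $(k\ell/\veps)^{O(k)}$ --- precisely the $\ell^{\,k}$ blow-up the paper flags for the coreset route, not FPT in $k$ since $\ell$ can be $\Omega(n)$; moreover you cannot rescale by each group's own optimal contribution, which is unknown and would need $\ell$ independent guesses. For (ii): sampling only from the currently worst group comes with no argument that the cores of all $k$ optimal clusters are eventually hit; the worst group at round $t$ need not intersect the cores of the unhit clusters, and no charging argument is given. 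So the central difficulty --- producing, in time independent of $\ell$, candidate centers whose error is simultaneously small for all groups --- is left open in your sketch.

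The paper resolves this by a structurally different route. It writes an LP relaxation with one constraint per group (so the relaxation ``sees'' the max objective), rounds it in the style of Young to obtain a set $C$ of $O((k/\veps^2)\ln^2 n)$ centers whose \emph{fair} cost is at most $(1+\veps)\,\OPTT$ (Theorem~\ref{theorem:bi-criteria}, via Lemmas~\ref{lemma:bicriteria} and~\ref{lemma:amplification}), and then proves that projecting each optimal center onto its nearest point of $C$ yields a $k$-subset of $C$ that is a $\rp{3}{\z-1}(\alpha+2)$-approximation (Lemma~\ref{lemma:main}). There the per-group error term is exactly $\rp{3}{\z-1}\costt{C,P_j}\le \rp{3}{\z-1}(1+\veps)\,\OPTT$ for \emph{every} group $j$ --- the uniform-over-groups control that your sampled candidate centers lack --- and enumerating the $\binom{|C|}{k}\le (k/\veps)^{O(k)}\cdot n^{O(1)}$ subsets gives the FPT bound. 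Your proposal identifies the correct factor and the correct geometric projection step, but without a proof of the list guarantee for the max objective it does not establish the theorem.
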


\noindent The following are two immediate corollaries of the above theorem.

\begin{corollary}[$k$-median]
For the socially fair $k$-median problem, there is a randomized $(3+\veps)$-approximation algorithm with FPT running time of $(k/\veps)^{O(k )} \cdot n^{O(1)}$ that succeeds with probabaility at least $1-1/n$.
\end{corollary}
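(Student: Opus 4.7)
The plan is short: the Corollary follows as an immediate instantiation of Theorem~\ref{theorem:main_theorem}. Concretely, the socially fair $k$-median problem is by definition the socially fair clustering problem of Definition~1 restricted to $z = 1$ (unsquared distances in the cost function). Substituting $z = 1$ into the guarantee $(\rp{3}{\z} + \varepsilon)$ of Theorem~\ref{theorem:main_theorem} yields the stated $(3 + \varepsilon)$-approximation, while the FPT running time $(k/\varepsilon)^{O(k)} \cdot n^{O(1)}$, the randomization, and the success probability $1 - 1/n$ transfer verbatim. No additional algorithmic construction or analysis is required at this level, so the only thing to write in the proof is a single sentence invoking the main theorem with $z \gets 1$.

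All of the technical content therefore sits in the proof of Theorem~\ref{theorem:main_theorem}, not in this Corollary. For completeness, my plan for that underlying theorem would be to construct an FPT-sized list of $(k/\varepsilon)^{O(k)}$ candidate center sets via an iterative $D^z$-sampling procedure in the spirit of the classical FPT approximations for unconstrained $k$-median/$k$-means: at each of $k$ rounds, sample a point from $P$ according to a distribution that reflects the current partial center set and the group weights, enumerate $(k/\varepsilon)^{O(1)}$ facility choices in the vicinity of that point, and in the end report the candidate in the list with minimum fair cost, evaluated in polynomial time per candidate. The factor $\rp{3}{\z}$ would arise from the standard triangle-inequality bound $d(c_i, x)^{\z} \leq (d(c_i, c_i^{\star}) + d(c_i^{\star}, x))^{\z}$ applied once the sampled point, and hence $c_i$, is forced to lie within the cluster-average distance of the optimal center $c_i^{\star}$.

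The main obstacle, which separates this argument from the classical unconstrained setting, would be calibrating the sampling so that it simultaneously succeeds across all $\ell$ groups without the approximation ratio degrading with $\ell$ and without blowing up the running time by an $\ell^{O(k)}$ factor. Resolving this would likely require showing that only $O(k)$ ``representative'' groups are effectively involved per execution (one per optimal cluster), so that the necessary enumeration stays within $(k/\varepsilon)^{O(k)}$; together with standard boosting by independent repetitions this would push the success probability up to $1 - 1/n$. Once Theorem~\ref{theorem:main_theorem} is established along these lines, the Corollary is obtained by the one-line specialization described above.
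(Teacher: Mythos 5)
Your handling of the corollary itself is exactly what the paper does: it is stated as an immediate corollary of Theorem~\ref{theorem:main_theorem}, obtained by setting $z=1$ so that $3^{z}+\varepsilon$ becomes $3+\varepsilon$, with the running time and success probability carried over verbatim. That part is correct and requires nothing more.

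Your sketch of how you would prove the underlying Theorem~\ref{theorem:main_theorem}, however, is not the paper's route, and as written it has a real gap. The paper does not use $D^{z}$-sampling or enumeration of candidate centers near sampled points; it first builds, via LP relaxation and randomized rounding (Theorem~\ref{theorem:bi-criteria}, Lemmas~\ref{lemma:bicriteria} and~\ref{lemma:amplification}), an $\bigl(O((k/\varepsilon^{2})\ln^{2}n),\,1+\varepsilon\bigr)$ bi-criteria solution $C$ whose fair cost is at most $(1+\varepsilon)\,OPT$, and then (Lemma~\ref{lemma:main}) shows that the $k$-subset of $C$ obtained by snapping each optimal center $c_i^{*}$ to its nearest point of $C$ is a $3^{z-1}(\alpha+2)$-approximation, which is found by brute force over all $\binom{|C|}{k}$ subsets in time $(k/\varepsilon)^{O(k)}n^{O(1)}$; the $3^{z}$ factor comes from a double application of the triangle inequality there, not from a sampling argument. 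Your proposed sampling approach runs straight into the difficulty the paper itself flags for coreset-style methods: the fair objective is a maximum over $\ell$ groups, so the sampling or enumeration must succeed for all groups simultaneously, and your suggested fix (``only $O(k)$ representative groups are effectively involved'') is asserted without justification and is not obviously true, since $\ell$ can be $\Omega(n)$ and which group attains the maximum depends on the candidate solution. The bi-criteria LP step is precisely what lets the paper control all $\ell$ groups at once (the LP has one cost constraint per group, and Markov plus independent repetition handles the union bound), so if you wanted to prove the main theorem yourself you would need either to adopt that structure or to supply a genuinely new argument for the uniform-over-groups guarantee.
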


\begin{corollary}[$k$-means]
For the socially fair $k$-means problem, there is a randomized $(9+\veps)$-approximation algorithm with FPT running time of $(k/\veps)^{O(k )} \cdot n^{O(1)}$ that succeeds with probabaility at least $1-1/n$. 
\end{corollary}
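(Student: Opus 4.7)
The plan is to reduce the problem to a list-$k$-service task: construct a list $\mathcal{L}$ of at most $(k/\veps)^{O(k)} \cdot n^{O(1)}$ candidate size-$k$ subsets of $F$ such that, with probability at least $1-1/n$, some $C \in \mathcal{L}$ satisfies $\fair{C,P} \leq (\rp{3}{\z}+\veps) \cdot \OPT$. Since $\fair{C,P}$ is computable in polynomial time for any fixed $C$, the final algorithm simply enumerates $\mathcal{L}$ and returns the minimizer. To build $\mathcal{L}$, I would run a $D^{\z}$-sampling style subroutine $(k/\veps)^{O(k)}$ times; each run performs $k$ sampling rounds, drawing a pivot $p$ from $P$ at each round and then branching over a small set of feasible centers in $F$ lying near $p$, adding every leaf of the resulting search tree to $\mathcal{L}$.

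Fix an optimal fair solution $C^{\star}=\{c_1^{\star},\dotsc,c_k^{\star}\}$ with $\OPT=\fair{C^{\star},P}$, and let $C^{\star}_{j,i}\subseteq P_j$ denote the points of $P_j$ served by $c_i^{\star}$, so that $\sum_i \costt{c_i^{\star}, C^{\star}_{j,i}} \leq \OPT$ for every $j$. The per-cluster lemma I would prove is: if a pivot $p$ is drawn from $C^{\star}_{j,i}$ with probability proportional to $w_j(p)$ and $c\in F$ is the feasible center nearest to $p$, then $\E_p[\costt{c, C^{\star}_{j,i}}] \leq \rp{3}{\z} \cdot \costt{c_i^{\star}, C^{\star}_{j,i}}$. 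The argument combines $d(c,c_i^{\star}) \leq 2d(p,c_i^{\star})$ (which follows because $c_i^{\star}\in F$ and $c$ is nearest to $p$ in $F$) with the triangle inequality $d(q,c) \leq d(q,c_i^{\star})+2d(p,c_i^{\star})$, and a Minkowski/H\"older-type calculation on $\E_p[d(q,c)^{\z}]$ produces the factor $\rp{3}{\z}$---this is essentially the classical $D^{\z}$-sampling analysis of Aggarwal--Deshpande--Kannan, adapted to the weighted setting. A Markov-type bound on this expectation then converts the expected-value guarantee into a $(\rp{3}{\z}+\veps)$ bound that holds with constant probability.

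The main obstacle is the multi-group objective: the \emph{same} $C$ must simultaneously serve all $\ell$ groups, yet a single pivot drawn from one group reflects only that group's cost structure. I would address this by making the pivot sampling adaptive across groups as well as clusters: at each round the subroutine first guesses which group should supply the pivot (branching over the $\ell$ choices) and then samples from $P_j$ proportionally to its current $\z$-cost to the partial center set. A standard analysis of cost-proportional sampling shows that, conditioned on the partial center set having already captured $t<k$ of the $k$ optimal centers (via a nearby feasible center), the probability that the next pivot is good for an uncaptured cluster in the guessed group is $\Omega(\veps/k)$; hence one full run succeeds with probability $(\veps/k)^{O(k)}$, and $(k/\veps)^{O(k)}$ independent runs amplify this above $1-1/n$. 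Once all $k$ optimal centers are captured by some $C\in\mathcal{L}$, the per-cluster lemma yields $\costt{C,P_j} \leq (\rp{3}{\z}+\veps) \cdot \costt{C^{\star},P_j}$ for every $j$, and taking the maximum over $j$ gives the fair-cost guarantee. A preliminary group-merging reduction bringing $\ell$ down to $\mathrm{poly}(k/\veps)$ absorbs the $\ell^{O(k)}$ branching factor into $(k/\veps)^{O(k)}$.
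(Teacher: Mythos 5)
Your route (adaptive $D^{\z}$-sampling of pivots plus branching over nearby feasible centers) is genuinely different from the paper's (an LP-based bi-criteria solution with $O((k/\veps^2)\ln^2 n)$ centers that is simultaneously $(1+\veps)$-good for \emph{every} group, followed by brute-force enumeration of its $k$-subsets), but as written it has a gap at exactly the point the paper identifies as the crux. Your per-run branching over which of the $\ell$ groups supplies the pivot gives an $\ell^{O(k)}$ factor, and you discharge it with an unspecified ``group-merging reduction bringing $\ell$ down to $\mathrm{poly}(k/\veps)$.'' No such reduction is available: the objective is $\max_{j}\costt{C,P_j}$, and merging $t$ groups into one replaces the maximum of their costs by their sum, which distorts the objective by a factor up to $t$; since $\ell$ can be $\Omega(n)$, shrinking $\ell$ to $\mathrm{poly}(k/\veps)$ would cost an unbounded approximation factor. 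This is precisely why the paper rejects the coreset/enumeration route (which also incurs $\ell^{\,k}$) and instead builds a single candidate pool whose size is independent of $\ell$, so that the $\binom{\beta k}{k}$ enumeration alone carries the FPT cost.

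There is a second, independent problem with the per-cluster lemma as the engine of the argument. The pivot $p$ is drawn from one group's portion $C^{\star}_{j,i}$ of an optimal cluster, and the resulting center $c$ is good \emph{in expectation} only for that portion; but $c$ must also serve every other group's portion $C^{\star}_{j',i}$ of the same cluster, and the error term there is of the form $w_{j'}(C^{\star}_{j',i})\cdot d(p,c_i^{\star})^{\z}$, charged against group $j'$'s budget while $d(p,c_i^{\star})$ is controlled only relative to group $j$'s average cost in that cluster --- these can differ arbitrarily, so the fair cost of other groups is not bounded. Moreover, a Markov-type conversion of per-cluster expectations gives constant success probability per (cluster, group) pair, which does not assemble into a guarantee holding simultaneously for all $\ell=\Omega(n)$ groups within $(k/\veps)^{O(k)}$ runs. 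The paper sidesteps both issues structurally: the LP has one fairness constraint per group, the rounding bounds each group's expected cost by $(1+\veps/2)\cdot OPT$, amplification plus a union bound over all $\ell\le n$ groups yields a center set $C$ with $\fair{C,P}\le(1+\veps)OPT$, and then a purely existential argument (map each $c_i^{\star}$ to its nearest center in $C$ and apply the approximate triangle inequality) shows some $k$-subset of $C$ is a $(\rp{3}{\z}+\veps)$-approximation, found by enumeration. Unless you can prove a simultaneous-for-all-groups version of your sampling lemma, or bound the group-branching without merging groups, the proposal does not establish the corollary.
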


In Section~\ref{section:lower_bound}, we establish FPT hardness of approximation results for the problem that follow from the known hardness results of the unconstrained clustering problems. The following are two main results:

\begin{theorem} [FPT Hardness for Parameters: $\ell$ and $k$]\label{theorem:hardness_k_ell}
For any constant $\veps>0$ and functions: $g \colon \mathbb{R}^{+} \to \mathbb{R}^{+}$ and $f  \colon \mathbb{R}^{+} \to \mathbb{R}^{+}$, the socially fair $k$-median and $k$-means problems can not be approximated to factors $(1+2/e-\veps)$ and $(1+8/e-\veps)$, respectively, in time $g(k) \cdot n^{f(\ell) \cdot o(k) }$, assuming Gap-ETH.
\end{theorem}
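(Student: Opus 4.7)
The plan is to reduce the classical (unconstrained) $k$-median and $k$-means problems to their socially fair counterparts with a trivial single-group instance, and then invoke the known FPT hardness for the classical problems. Specifically, given any instance $(\X, P, F, d, k)$ of unconstrained $k$-median (respectively $k$-means), I construct the socially fair instance $\mI = (\X, P, P_1, w_1, F, d, k, z)$ with $\ell = 1$, $P_1 = P$, unit weights $w_1(p) = 1$ for all $p \in P$, and $z = 1$ (respectively $z = 2$). Since there is only one group, the fair cost $\fair{C,P}$ is identical to the classical clustering cost of $C$ on $P$. Hence any $\alpha$-approximate solution to $\mI$ is automatically an $\alpha$-approximate solution to the original unconstrained instance.

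With this reduction in hand, the hardness transfers directly. The statement I would invoke is the Gap-ETH based FPT inapproximability of Cohen-Addad, Gupta, Kumar, Lee, and Li: assuming Gap-ETH, unconstrained $k$-median cannot be approximated within $(1+2/e-\veps)$ and unconstrained $k$-means cannot be approximated within $(1+8/e-\veps)$ in time $g(k)\cdot n^{o(k)}$ for any function $g$. Suppose, for contradiction, there were an algorithm for socially fair $k$-median running in time $g(k)\cdot n^{f(\ell)\cdot o(k)}$ with approximation ratio strictly better than $1 + 2/e$. Applied to our reduction with $\ell=1$, its running time would be $g(k)\cdot n^{f(1)\cdot o(k)}$; since $f(1)$ is a fixed constant, this is simply $g'(k)\cdot n^{o(k)}$, contradicting the hardness of the unconstrained problem. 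The same argument with $z=2$ yields the $(1+8/e-\veps)$ lower bound for socially fair $k$-means.

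The only mild obstacle is formally checking that our socially fair problem definition degenerates cleanly to the classical one when $\ell = 1$ with uniform weights, and that the underlying Cohen-Addad et al. hardness is stated in a form that tolerates arbitrary metric $(\X,d)$ and an explicit set of feasible centers $F$ (rather than $F = \X$). Both facts are standard: their reduction produces metric instances with a prescribed candidate-center set, so the transfer is immediate. I would present the reduction in one short paragraph, cite the classical FPT hardness as a black-box theorem, and conclude with the contrapositive time-complexity computation above.
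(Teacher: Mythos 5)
Your proposal is correct and matches the paper's own argument: specialize to $\ell = 1$ (one group, unit weights), observe that the socially fair objective then coincides with the classical $k$-median/$k$-means cost, and note that $n^{f(1)\cdot o(k)} = n^{o(k)}$, contradicting the Gap-ETH FPT inapproximability of the unconstrained problems. The only cosmetic difference is attribution of the black-box hardness — the paper invokes Corollary~3 of Manurangsi's SODA 2020 work rather than Cohen-Addad et al. — but the cited statement is the same.
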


\begin{theorem}[FPT Hardness for Parameter $k$]\label{theorem:hardness_k}
For any $z \geq 0$, $\veps>0$, and function $g \colon \mathbb{R}^{+} \to \mathbb{R}^{+}$, the socially fair clustering problem can not be approximated to factor $(\rp{3}{\z} - \veps)$ in time $g(k) \cdot n^{o(k)}$, assuming Gap-ETH.
\end{theorem}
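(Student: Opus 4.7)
The plan is to reduce from the exact $k$-\textsc{Dominating Set} problem, which by the classical Chen--Huang--Kanj--Xia result has no $g(k)\cdot n^{o(k)}$-time algorithm under ETH, and hence under Gap-ETH. The key observation is that, since $\ell$ is not a parameter in Theorem~\ref{theorem:hardness_k}, we are free to use singleton groups $P_u=\{p_u\}$, in which case $\fair{C,P}$ collapses into $\max_u \{d(C,p_u)^{\z}\cdot w_u(p_u)\}$, i.e.\ a $k$-center-style max objective. This is precisely the structure needed to encode a covering instance with a multiplicative cost gap, and it is what forces the hardness factor to jump from $(1+2/e)$-type bounds (as known for plain $k$-median with $\ell=1$) to the $\rp{3}{\z}$ regime.

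Given a $k$-\textsc{Set Cover} instance with universe $U$ and family $\mathcal{F}$, the construction is as follows: put one feasible center $c_S$ per $S\in\mathcal{F}$, one point $p_u$ per $u\in U$, set $P_u:=\{p_u\}$ with $w_u(p_u)=1$ (so $\ell=|U|$), and define
\[
 d(p_u,c_S) \;=\; \begin{cases} 1 & \text{if } u\in S,\\ 3 & \text{if } u\notin S,\end{cases}
 \qquad d(c_S,c_{S'})\;=\;d(p_u,p_{u'})\;=\;2 \text{ for distinct pairs.}
\]
All pairwise distances lie in $\{1,2,3\}$, and a case check on every triple confirms the triangle inequality, so $d$ is a valid metric. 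In particular, for any $C\subseteq F$ with $|C|=k$, we have $d(C,p_u)\in\{1,3\}$, equal to $1$ precisely when some chosen $c_S$ satisfies $u\in S$.

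If $(U,\mathcal{F})$ admits a set cover of size $k$, the corresponding centers give $d(C,p_u)=1$ for every $u$, so $\fair{C,P}=1$. If no $k$ sets cover $U$, then for every $C$ with $|C|=k$ some $u$ remains uncovered and $d(C,p_u)=3$, yielding $\fair{C,P}\ge \rp{3}{\z}$. Hence any $(\rp{3}{\z}-\veps)$-approximation algorithm running in $g(k)\cdot n^{o(k)}$ time would decide $k$-\textsc{Dominating Set} in the same time, contradicting Gap-ETH. The main subtlety is that the $3^{\z}$ ratio is essentially forced by the triangle inequality: for any covering $S$ and any $S'\in\mathcal{F}$, $d(p_u,c_{S'})\le d(p_u,c_S)+d(c_S,c_{S'})=1+2=3$, so the ``far'' distance cannot be made larger in such an embedding. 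This simultaneously shows that the $(\rp{3}{\z}+\veps)$-approximation in Theorem~\ref{theorem:main_theorem} is tight up to the additive $\veps$, and it does so uniformly in $z\ge 1$ since the same metric serves all exponents, with the cost gap being the $\z$th power of the distance gap.
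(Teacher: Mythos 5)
Your proposal is correct and follows essentially the same route as the paper: the paper first reduces the set-coverage decision problem to the $k$-supplier problem using exactly your $\{1,2,3\}$ metric (one center per set, one point per element), and then observes that $k$-supplier is the special case of socially fair clustering with singleton unit-weight groups, whereas you simply fold these two steps into one direct reduction. The only other difference is the hardness source --- you invoke the ETH-based Chen--Huang--Kanj--Xia $n^{o(k)}$ lower bound for the exact covering problem (which you label $k$-Dominating Set but instantiate as $k$-Set Cover; the two are interchangeable here) and note that Gap-ETH implies ETH, while the paper cites Manurangsi's Gap-ETH-based hardness of set coverage --- and both choices are valid.
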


\noindent  This completes the summary of our results. Note that Theorems~\ref{theorem:main_theorem} and~\ref{theorem:hardness_k} give tight approximation bounds for the socially fair clustering problem when parameterized by $k$. Thus, it settles the complexity of the problem when parameterized by $k$. Next, we compare our work with the previous related works.

\section{Related Work}
The previous works of Abbasi~\emph{et al.}~\cite{fairness:2021_Socially_Abbasi}, and Makarychev and Vakilian~\cite{fairness:2021_Socially_Makarychev} were based on the LP relaxation and rounding techniques. They gave $O(\ell)$ and $O\left( e^{O(z)} \cdot \frac{\log \ell}{\log \log \ell} \right)$ approximation guarantees, respectively, for the socially fair clustering problem. On the other hand, Ghadiri~\emph{et al.}~\cite{fairness:2020_Socially_Ghadiri} designed a socially fair $k$-means algorithm with performance guarantees similar to the Lloyd's heuristics~\cite{kmeans:lloyds} (popularly known as the \emph{$k$-means algorithm}). In contrast, we use a simple bi-criteria approximation algorithm as a subroutine to obtain FPT time constant factor approximation algorithm for the problem. 

Recently, Bandyapadhyay~\emph{et al.}~\cite{fairness:2021_constrained_Bandyapadhyay} gave an FPT time constant factor approximation algorithm for a variant of ``balance'' based fair clustering problem. This variant was first studied by Chierichetti \emph{et al.}~\cite{fairness:2017_Chierichetti_NIPS} and later generalized by Bera~\emph{et al.}~\cite{fairness:2019_Bera_NIPS}. According to this variant, a clustering is said to be fair if within each cluster, the fraction of points that belongs to the $j^{th}$ group is at least $\beta_{j}$ and at most $\alpha_{j}$, for some constants $0\leq\alpha_{j},\beta_{j}\leq1$. 
This results in fair representation of every group within each cluster. 
It turns out that this variant falls under a broad class of the \emph{constrained} $k$-median/$k$-means problem~\cite{fairness:2021_constrained_Bandyapadhyay,constrained:2015_Ding_and_Xu}. 
Informally, the constrained $k$-median/$k$-means problem is a class of clustering problems where a set of constraints can be imposed on the clusters in addition to optimising the $k$-means/$k$-median cost.
Various other problems like:~\emph{uniform capacitated} $k$-median/$k$-means problem~\cite{capacitated:kmedian_2017_Li_uniform}, \emph{outlier} $k$-median/$k$-means problem~\cite{outlier:kmeans_2018_Ravishankar}, \emph{fault-tolerant} $k$-median/$k$-means problem~\cite{fault:kmedian_2014_non_uniform_haji_li_SODA}, etc., fall in this category.

Moreover, it is known that if any constrained $k$-median/$k$-means problem admits an FPT time \emph{partition algorithm}, then it also admits an FPT time constant factor approximation algorithm, in general metric spaces~\cite{constrained:2020_GJK_FPT}. We refer the reader to~\cite{constrained:2015_Ding_and_Xu,constrained:2016_bjk,constrained:2020_GJK_FPT} for the definitions of constrained clustering, partition algorithm, and other related constrained clustering examples.
We skip these details since they are not immediately relevant to our discussion; 
we just wanted to convey the high level idea. 
Since Chierichetti~\emph{et al.}'s~\cite{fairness:2017_Chierichetti_NIPS} definition of the fair clustering fits the constrained clustering framework, it is tempting to check if the socially fair clustering problem fits the constrained clustering framework. Unfortunately, the objective function of socially fair clustering differs from the classical $k$-median and $k$-means objectives. 
Therefore, the problem cannot be treated as a constrained clustering problem.
However, we note that the cost function for each group $P_{j}$ is exactly the same as the $k$-median/$k$-means objective. 
We use this fact to design a polynomial time bi-criteria approximation algorithm for the problem. Then, we convert the bi-criteria approximation algorithm to a constant factor approximation algorithm in FPT time. We will formally define the bi-criteria approximation algorithm in Section~\ref{subsection:1}.

Another way of approaching this problem is to obtain a \emph{strong coreset} for the socially fair clustering instance. The coreset can be easily obtained by computing the coresets for each group $P_j$ individually. For the coreset definition and its construction, see the work of Ke Chen~\cite{coreset:Chen09}, or Cohen-Addad~\cite{coreset:vincent_STOC_21}. After obtaining a coreset of the point set $P$, one can employ the techniques of Cohen-Addad~\emph{et al.}~\cite{fpt:2019_vincent}, and Cohen-Addad and Li~\cite{capacitated:FPT_2019_vincent} to obtain a constant factor approximation for the problem in FPT time. 
The main idea is to try all possible $k$ combination of points in the coreset and choose the centers in $F$ that are closest to those points. This gives ${|S| \choose k}$ distinct center sets, where $|S|$ is the number of points in the coreset. It can then be shown that the center set that gives the least clustering cost is a $(\rp{3}{\z}+\veps)$ approximation for the problem. 
For details, see Section 2.2 of~\cite{capacitated:FPT_2019_vincent} in the context of $k$-median and $k$-means objectives.
However, there are an issue with this technique when we deal with the socially fair clustering objective. The issue is that the coreset size would have an $\ell$ term, where $\ell$ is the number of groups. Therefore, the running time would have a multiplicative factor of $\ell^{\,k}$ which makes the algorithm not be FPT in $k$. Also note that $\ell$ can be as large as $\Omega(n)$. In this work, without using coreset techniques, we design a $(\rp{3}{\z}+\veps)$ approximation algorithm for the problem. Moreover, the running time of the algorithm is FPT in $k$. In the following section, we mention some notations and facts that we use frequently in this paper.

\section{Notations and Facts} \label{section:notations}
Let $\mathcal{I} = (\X,P,P_1,\dotsc,P_{\ell},w_1,\dotsc,w_{\ell},F,d,k,z)$ be an instance of the socially fair clustering problem. For a weighted set $S \subseteq P$ with weight function $w \colon S \to \mathbb{R}^{+}$ and a center set $C \subseteq F$, we denote the \emph{unconstrained} clustering cost of $S$ with respect to $C$ by $\costt{C,S}$, i.e, \[\costt{C,S} \equiv \sum_{p \in S} d(C,p)^{\zl} \cdot w(p), \textup{ where } d(C,p) = \min_{c \in C} \{ d(c,p) \}.\] 
For simplicity, when $S = \{p\}$, we use the notation $\costt{C,p}$ instead of $\costt{C,\{p\}}$. Similarly, when $C = \{c\}$, we use the notation $\costt{c,S}$ instead of $\costt{\{ c\},S}$. In the remaining discussion, we will refer to the \emph{unconstrained clustering cost} simply as \emph{clustering cost}.

\noindent We denote the \emph{fair clustering} cost of $P$ with respect to a center set $C$ by $$\fair{C,P} \equiv \max_{j \in [\ell]} \Big\{ \costt{C,P_j} \Big\}.$$
Moreover, we denote the optimal fair clustering cost of $P$ by $OPT$ and optimal fair center set by $\Cst = \{ \cpst{1}, \dotsc, \cpst{k}\}$, i.e., $\fair{\Cst,P} = OPT$. We also use the notation $[t]$ to denote a set $\{1,\dotsc,t\}$ for any integer $t \geq 1$.
We also use the following inequality in our proofs. The inequality is a generalization of the triangle inequality and easily follows from the \emph{power-mean} inequality.
\begin{fact}[Approximate Triangle Inequality]\label{fact:1}
For any $z\geq 1$, and any four points: $q,r,s,t \in \X$, $d(q,t)^{\zl} \leq (d(q,r) + d(r,s) + d(s,t))^{\zl} \leq \rp{3}{\z-1} \cdot \left(d(q,r)^{\zl} + d(r,s)^{\zl} + d(s,t)^{\zl} \right)$.
\end{fact}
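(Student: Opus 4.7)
The plan is to verify the two inequalities separately; both follow from entirely standard facts and require no new machinery beyond the hint (the power-mean inequality).

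For the left inequality, $d(q,t)^{\zl} \leq (d(q,r) + d(r,s) + d(s,t))^{\zl}$, I would apply the ordinary triangle inequality for the metric $d$ twice, obtaining $d(q,t) \leq d(q,r) + d(r,t) \leq d(q,r) + d(r,s) + d(s,t)$. Since every distance is non-negative and $z \geq 1$, the map $x \mapsto x^{\zl}$ is monotonically non-decreasing on $[0,\infty)$, so raising both ends of this chain to the $z$-th power preserves the inequality.

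For the right inequality, I would invoke the power-mean inequality, equivalently Jensen's inequality applied to the convex function $\varphi(x) = x^{\zl}$ on $[0, \infty)$ (convex because $z \geq 1$). For any non-negative reals $a_1, \ldots, a_n$, it states
\[
\left(\frac{a_1 + \cdots + a_n}{n}\right)^{\zl} \;\leq\; \frac{a_1^{\zl} + \cdots + a_n^{\zl}}{n},
\]
which on multiplying through by $n^{\zl}$ rearranges to $(a_1 + \cdots + a_n)^{\zl} \leq n^{\z-1} \cdot (a_1^{\zl} + \cdots + a_n^{\zl})$. Substituting $n = 3$ with $a_1 = d(q,r)$, $a_2 = d(r,s)$, $a_3 = d(s,t)$ (all non-negative since $d$ is a metric) yields the claimed upper bound with constant $\rp{3}{\z-1}$.

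There is essentially no obstacle here: the fact is a one-line consequence of convexity paired with the ordinary triangle inequality. The only thing worth remarking on is that the constant $\rp{3}{\z-1}$ produced by the power-mean inequality is tight, being attained when $a_1 = a_2 = a_3$, so the bound cannot be improved by a purely algebraic argument of this kind.
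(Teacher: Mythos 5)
Your proof is correct and follows exactly the route the paper indicates: the left inequality from the ordinary triangle inequality together with monotonicity of $x \mapsto x^{\zl}$, and the right inequality from the power-mean (Jensen) inequality with $n=3$, which is precisely how the paper justifies Fact~\ref{fact:1}. Nothing is missing, and your remark that the constant $\rp{3}{\z-1}$ is tight when the three distances are equal is a nice addition.
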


\section{FPT Approximation}\label{section:fpt_approximation}

In this section, we design a $(\rp{3}{\z}+\veps)$-approximation algorithm for the socially fair clustering problem with FPT time of $\left( k/\veps \right) ^{O(k)} \cdot n^{O(1)}$. 
The algorithm turns out to be surprisingly simple. 
Our algorithm consists of the following two parts:
\begin{enumerate}
    \item A polynomial time $\left( O\left( (k/\veps^2) \cdot \ln^{2} n \right),1+\veps \right)$ bi-criteria approximation algorithm for the socially fair clustering problem.
    \item We use the above bi-criteria algorithm to obtain a center set $C \subseteq F$ of size $O((k/\veps^2) \cdot \ln^{2}n)$. We then show that there exists a $k$-sized subset $S \subset C$ that gives $(\rp{3}{\z}+\veps)$ approximation. Note that since one needs to try all possible $k$-sized subsets of $C$, the overall running time of the algorithm has a multiplicative factor of $O\binom{|C|}{k}$. This results in an FPT algorithm.
\end{enumerate}

\noindent We discuss the above two parts in Sections~\ref{subsection:1} and~\ref{subsection:2}.

\subsection{Bi-criteria Approximation}\label{subsection:1}

We start with the definition of $(\alpha,\beta)$ bi-criteria approximation algorithm:
\begin{definition} [Bi-criteria Approximation]
An algorithm is said to be $(\alpha,\beta)$ bi-criteria approximation for the problem if it outputs a set $C$ of $\beta k$ centers with fair clustering cost at most $\alpha$ times the optimal fair clustering cost with $k$ centers, i.e.,
\[
\fair{C,P} \leq \alpha \cdot \min_{|C'| = k \textrm{ and } C' \subseteq F} \Big\{ \, \fair{C',P} \, \Big\} = \alpha \cdot OPT
\]
\end{definition}

Note that for the \emph{unconstrained} clustering problem, there exists a randomized $\left( O(k \ln (1/\veps), 1+\veps)\right)$ bi-criteria approximation algorithm due to Neal Young~\cite{SE:neal_young}. 
We extend that algorithm to the socially fair clustering problem and obtain a randomized $\left( O\left( (k/\veps^2) \cdot \ln^{2} n \right),1+\veps \right)$ bi-criteria approximation algorithm. 
Formally, we state the result as follows:

\begin{theorem} [Fair Bi-Criteria Approximation]\label{theorem:bi-criteria}
Let $\mathcal{I} = (\X,P,P_1,\dotsc,P_{\ell},w_1,\dotsc,w_{\ell},F,d,k,z)$ be any instance of the socially fair clustering problem. Then, there exists a polynomial time algorithm that with probability at least $(1-1/n)$ outputs a center set $C \subseteq F$ of size $O\left( (k/\veps^2) \cdot \ln^{2} n \right)$ that is a $(1+\veps)$-approximation to the optimal fair clustering cost of $\mathcal{I}$ with $k$ centers. That is, $\fair{C,P} \leq (1+\veps) \cdot OPT$.
\end{theorem}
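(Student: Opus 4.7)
The plan is to adapt Neal Young's randomized-rounding bi-criteria algorithm~\cite{SE:neal_young} from the unconstrained $k$-clustering setting to the socially fair one. The critical observation is that each individual group's cost $\costt{C, P_j}$ is an ordinary weighted $k$-clustering cost; the challenge is to guarantee that a single random center set approximates the fractional optimum simultaneously for all $\ell$ groups. The algorithm will be LP-based: guess $OPT$, solve an LP relaxation, and round by independent sampling with probabilities scaled up by a logarithmic factor.

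Concretely, I would first guess $R = OPT$ up to a multiplicative $(1+\veps)$ factor by enumerating powers of $(1+\veps)$ between the smallest positive pairwise distance and $n$ times the diameter. For each guess $R$, write the natural LP relaxation with facility-opening variables $x_c \in [0,1]$ and assignment variables $y_{c,p} \in [0,1]$, subject to $y_{c,p} \leq x_c$, $\sum_c y_{c,p} \geq 1$ for every $p \in P$, and, for every group $j \in [\ell]$, the budget constraint
\[
\sum_{p \in P_j} w_j(p) \sum_{c \in F} y_{c,p} \cdot d(c,p)^{\zl} \leq R,
\]
with objective $\min \sum_c x_c$. Feasibility of $\Cst$ when $R = OPT$ ensures that the LP optimum is at most $k$. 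I would then independently include each $c$ in $C$ with probability $\min(1, \alpha x_c)$ for $\alpha = \Theta(\ln^{2} n / \veps^{2})$; a standard Chernoff bound gives $|C| = O((k/\veps^2)\ln^2 n)$ with high probability.

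The heart of the argument is the per-group cost bound: for each $p \in P_j$ the probability that $C$ contains no facility within radius $r$ of $p$ is at most $\exp\!\bigl(-\alpha \sum_{c:\, d(c,p)\leq r} x_c\bigr)$, so expressing $d(C,p)^{\zl}$ as an integral of tail indicators and using the LP budget constraint shows $\E[\costt{C, P_j}] \leq (1 + \veps/2) \cdot R$. Concentration of $\costt{C, P_j}$ around its expectation then follows by writing each point's contribution as a monotone function of the independent rounding decisions and applying a Bernstein-style inequality, pushing the failure probability per group below $n^{-3}$. A union bound over the $\ell \leq n$ groups and the $O(\log_{1+\veps}(n\Delta))$ guesses of $R$ completes the proof with failure probability $\leq 1/n$. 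The main obstacle will be precisely this concentration step: $d(C,p)^{\zl}$ is a minimum over a random subset of facilities, not a linear function of the rounding decisions, so one must carefully decompose it (for instance over geometrically spaced radius bands) into independent indicator events before invoking Chernoff or Bernstein; the extra $\ln n$ inside $\alpha$ relative to Young's single-group analysis is precisely what absorbs the union bound across the $\ell$ groups while keeping $|C| = O((k/\veps^2)\ln^2 n)$.
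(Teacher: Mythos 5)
Your overall plan (LP relaxation plus randomized facility rounding, then a union bound over the $\ell\leq n$ groups) is reasonable, but the step you yourself identify as the heart of the argument is exactly where the proposal has a genuine gap, and it is a gap the paper never has to face. First, the expectation bound is not complete as stated: writing $d(C,p)^{\zl}$ as an integral of tail probabilities and using $\Pr[\text{no open facility within } r]\leq \exp\bigl(-\alpha\sum_{c:d(c,p)\leq r}x_c\bigr)$ only controls the tail up to the radius where the LP mass around $p$ reaches $1$; beyond that the bound stops decaying, so the integral must be cut off at the diameter, leaving a term of the form $e^{-\Omega(\alpha)}\cdot\Delta^{\zl}\cdot w_j(p)$ whose comparison to $\veps\cdot OPT$ depends on the aspect ratio and is not addressed (the paper avoids this by its Phase-2 fallback: unassigned points are served by an $O(\ell)$-approximate center set, so the residual term is $(1-1/k)^{t}\cdot O(\ell)\cdot OPT\leq(\veps/2)\cdot OPT$). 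Second, and more seriously, the concentration of $\costt{C,P_j}$ around its expectation is asserted but not proved: $\costt{C,P_j}$ is a sum over points of minima over the \emph{shared} opening decisions, so it is neither a sum of independent terms nor a function with small bounded differences (closing one facility can raise the cost of every point in the group), and ``Bernstein after a band decomposition'' is not a proof. Since you flag this as the main obstacle and leave it unresolved, the proposal does not yet establish the theorem.

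It is worth noting how each side of this gap can be closed. The paper sidesteps concentration entirely: Lemma~\ref{lemma:bicriteria} only gives $\mathbb{E}[\costt{C',P_j}]\leq(1+\veps/2)\cdot OPT$ per group, and Lemma~\ref{lemma:amplification} then applies Markov's inequality per group per run, repeats \RS\ independently $r=O(\ln n/\veps)$ times, and takes the \emph{union} of the returned center sets, so a group exceeds $(1+\veps)\cdot OPT$ only if it does so in every run; a union bound over groups gives the $1-1/n$ guarantee, and the union of runs is what produces the $O((k/\veps^{2})\ln^{2}n)$ size. Alternatively, within your own one-shot rounding framework you do not need any concentration inequality: for each point $p$ let $r_p$ be the smallest radius capturing LP assignment mass at least $\veps/2$ around $p$; by Markov on the fractional assignment, $r_p^{\zl}\leq(1+\veps)\,\hat{d}_p$ where $\hat{d}_p$ is $p$'s fractional cost, and some facility within $r_p$ opens except with probability $e^{-\Omega(\alpha\veps)}$. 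A union bound over all $n$ points then gives, with high probability, $d(C,p)^{\zl}\leq(1+\veps)\hat{d}_p$ simultaneously for all $p$, and summing against the per-group budget constraint yields $\fair{C,P}\leq(1+\veps)\cdot OPT$ directly (in fact with $\alpha=O(\ln n/\veps)$, i.e., a smaller center set than you claim). Either of these repairs would complete your argument; as written, the proposal is incomplete.
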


The above theorem follows from the next two lemmas. 
Since the proof of these lemmas follows from known techniques, we defer the detailed proof to the Appendix. 
Here, we discuss the techniques involved.

\begin{lemma}\label{lemma:bicriteria}
There is a polynomial time randomized algorithm \RS\ that outputs a center set $C'$ such that for every group $P_j \in \{ P_1,\dotsc,P_{\ell} \}$, the expected clustering cost of $P_j$ with respect to $C'$ is at most $(1+\veps/2)$ times the optimal fair clustering cost of instance $\mathcal{I}$. That is, for all $j$,
$$
\mathbb{E} \left[ \, \costt{C',P_j} \, \right] \leq \left( 1+\frac{\veps}{2} \right) \cdot OPT
$$
\end{lemma}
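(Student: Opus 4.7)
The plan is to solve the natural LP relaxation of the socially fair clustering problem and apply Neal Young's randomized rounding (in the style of Chernoff--Wald) to the fractional solution. The assignment LP has opening variables $x_c \in [0,1]$ for every $c \in F$, fractional assignment variables $x_{p,c} \in [0,1]$ for every pair $(p,c)$, and the constraints $\sum_c x_{p,c} = 1$ for every $p$, $x_{p,c} \leq x_c$, $\sum_c x_c \leq k$, together with the group-cost constraint
\[
  \sum_{p \in P_j} w_j(p) \sum_{c \in F} d(c,p)^{\zl}\,x_{p,c} \;\leq\; \tau \qquad \text{for every } j \in [\ell].
\]
Guess $\tau$ on a geometric $(1+\veps/8)$-grid. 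Since $\Cst$ together with ``assign-to-nearest-in-$\Cst$'' indicator variables is an integral feasible solution at $\tau = OPT$, a polynomial-time LP solver returns a fractional solution with per-group cost at most $\tau \leq (1+\veps/8)\,OPT$.

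The subroutine \RS\ then performs Young-style randomized rounding on this LP solution: include each $c \in F$ independently in $C'$ with probability $\min\{1,\alpha\,x_c\}$, where $\alpha = \Theta((\log n)/\veps)$. By linearity of expectation, to prove the per-group cost bound it suffices to show the per-point estimate
\[
  \mathbb{E}\!\left[d(C',p)^{\zl}\right] \;\leq\; (1+\veps/4)\,\sigma_p, \qquad \sigma_p \coloneqq \sum_{c\in F} d(c,p)^{\zl}\,x_{p,c}.
\]
This follows from the standard distance-class / filtering argument: for $i \geq 0$ consider the nested sets $R_i = \{c \in F : d(c,p)^{\zl} \leq (1+\veps/16)^i\,\sigma_p\}$; a Markov-type bound on $\sigma_p$ gives $\sum_{c \in R_i} x_{p,c} \geq 1 - (1+\veps/16)^{-i}$, so via $x_{p,c} \leq x_c$ the opening mass inside the innermost shell $R_1$ is $\Omega(\veps)$. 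Hence some center in $R_1$ is open except with probability $\exp(-\Omega(\alpha\veps)) = n^{-\Omega(1)}$, and telescoping the failure events $R_1 \subseteq R_2 \subseteq \ldots$ bounds the contribution of farther shells against $\sigma_p$ using the approximate triangle inequality of Fact~\ref{fact:1} (which is where the extension from Young's $z = 1$ analysis to general $z \geq 1$ enters). Summing this per-point bound with weights $w_j(p)$ over $p \in P_j$ and invoking the LP group-cost constraint yields $\mathbb{E}[\costt{C', P_j}] \leq (1+\veps/4)(1+\veps/8)\,OPT \leq (1+\veps/2)\,OPT$, as required.

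The main obstacle lies in controlling the tail contribution of the rare event that no center in the innermost shell is opened: naively, $p$ could then be paired with a very distant open center and inflate the expected cost beyond $(1+\veps/2)\,\sigma_p$. The choice $\alpha = \Theta((\log n)/\veps)$ drives this failure probability to $1/\text{poly}(n)$, and the telescoping argument over geometric shells bounds the resulting tail contribution by an $O(\veps)\,\sigma_p$ additive error, which is absorbed into the $(1+\veps/4)$ slack in the per-point bound. The overall output size of \RS\ is $O(\alpha k) = O((k/\veps)\log n)$ in expectation; the extra $(\log n)/\veps$ factor in the size bound of Theorem~\ref{theorem:bi-criteria} comes from taking the union of $\Theta((\log n)/\veps)$ independent copies of \RS\ in the proof of that theorem, which converts the in-expectation cost bound into the required high-probability bound simultaneously for all $\ell$ groups.
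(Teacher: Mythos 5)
Your overall architecture (solve the natural LP relaxation, randomly round, then take $\Theta((\log n)/\veps)$ independent copies to turn the in-expectation guarantee into a simultaneous high-probability guarantee for all $\ell$ groups) matches the paper, but your rounding step and its analysis differ from the paper's \RS, and the analysis has a genuine gap at precisely the step you flag as the main obstacle. The per-point bound $\mathbb{E}\big[d(C',p)^{z}\big] \leq (1+\veps/4)\,\sigma_p$ cannot be established by the shell/telescoping argument alone. With independent openings with probabilities $\min\{1,\alpha x_c\}$ and $\alpha=\Theta((\log n)/\veps)$, the probability that no center of a shell $R_i$ opens is only bounded by $\exp\big(-\alpha\sum_{c\in R_i}x_c\big)$, and since $\sum_c x_{p,c}=1$ you can only guarantee $\sum_{c\in R_i}x_c\geq 1-(1+\veps/16)^{-i}\leq 1$. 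Hence the failure probability of every shell, including the outermost one (the event that $C'$ contains no center at all), is only bounded by $e^{-\alpha}=n^{-\Theta(1/\veps)}$, which is positive. The cost conditioned on these far-failure events is controlled only by the largest distance in the instance, so the telescoped tail is of order $n^{-\Theta(1/\veps)}\cdot d_{\max}^{\,z}$, and this is $O(\veps)\sigma_p$ only if the aspect ratio $d_{\max}^{\,z}/\sigma_p$ is at most roughly $n^{O(1/\veps)}$; nothing in your proposal enforces such a bound (distances are arbitrary, and $\sigma_p$ can even be $0$, in which case the multiplicative per-point bound is outright false while the failure probability is still positive). A minor point: Fact~\ref{fact:1} plays no role in this step, since the shells are defined directly in terms of $d(c,p)^z$; the approximate triangle inequality is needed only later, in the bi-criteria-to-FPT conversion of Lemma~\ref{lemma:main}.

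The paper closes exactly this hole with a backup phase instead of tail telescoping: \RS\ samples $t=k\ln(2cn/\veps)$ centers with probability $y_f/k$ and assigns each still-unassigned point to the sampled center $f^*$ with probability $x_{f^*,p}/y_{f^*}$, so a point remains unassigned with probability $(1-1/k)^t\leq \veps/(2cn)$; the leftover points are then served by the centers of a polynomial-time $O(\ell)$-approximate solution that is added to $C'$ deterministically, so their conditional cost is at most $O(\ell)\cdot OPT\leq O(n)\cdot OPT$, and the product contributes only an additive $(\veps/2)\cdot OPT$ per group on top of the fractional cost $\gamma\leq OPT$. Note that the lemma only requires this additive slack per group, not a per-point multiplicative bound. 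Your argument becomes correct if you add the same ingredient: deterministically include the $k$ centers of an $O(\ell)$-approximation (or otherwise bound the aspect ratio before rounding) and charge the $n^{-\Omega(1)}$-probability failure events against that backup rather than against $\sigma_p$. The guessing of $\tau$ on a geometric grid is unnecessary (the paper simply minimizes $\gamma$ inside the LP), though harmless.
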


The above lemma follows from a modification of the known bi-criteria approximation algorithm for the unconstrained clustering problem \cite{SE:neal_young} which in turn follows from an LP-rounding technique with respect to the most natural linear programming formulation of the problem. 
We give the outline of the Linear Programming (LP) relaxation and the rounding procedure while deferring the analysis to the Appendix.
We start with the natural LP-relaxation for the problem:
\begin{framed}
\[
    \hspace{-11.7cm} \textrm{minimize} \quad  \gamma 
\]
\vspace{-7.5mm}
\begin{align*}
    \textrm{subject to} \quad &\sum_{f \in F} y_{f} = k &&\\
    &\sum_{f \in F} x_{f,p} = 1 \quad &&\textrm{for every point $p \in P$}\\
    &\sum_{p \in P_j} \sum_{f \in F} x_{f,p} \cdot  d(f,p)^{\zl} \cdot w_j(p) \leq \gamma  \quad &&\textrm{for every $P_j \in \{P_1,\dotsc,P_{\ell} \}$} \\
    & x_{f,p} \leq y_{f} \quad &&\textrm{for every $f \in F$ and $p \in P$} \\
    &y_f, x_{f,p} \geq 0 \quad &&\textrm{for every $f \in F$ and $p \in P$} 
\end{align*}
\vspace{-5.0mm}
\end{framed}
\noindent Here, $y_f$ is a variable that denote the fraction of a center $f$ picked in the solution. The variable $x_{f,p}$ denote the fraction of point $p$ assigned to center $f$. The variable $\gamma$ denote the fair clustering cost of a feasible fractional solution. We solve the above linear program to obtain the fractional optimal solution: $y_f^{*}$,  $x_{f,p}^{*}$, and $\gamma^{*}$. Since it is a relaxation to the original problem, $\gamma^{*} \leq OPT$. For simplicity, we use the notations: $y_f$, $x_{f,p}$, $\gamma$ for $y_f^{*}$, $x_{f,p}^{*}$, and $\gamma^{*}$, respectively.

\begin{Algorithm}[h]
\begin{framed}
\RS~($\mathcal{I}$, $y_f$'s, $x_{f,p}$'s, $\veps$) \vspace{2mm}\\
\hspace*{0.2in} {\bf Inputs}: Socially fair clustering instance $\mathcal{I}$, fractional optimal solution: $y_f$'s and $x_{f,p}$'s of \\
\hspace*{0.78in} the relaxed LP, and accuracy $\veps \leq 1$. \\
\hspace*{0.2in} {\bf Output}: A center set $C' \subseteq F$ of size $O(k \ln (n/\veps))$ such that \\
\hspace*{0.89in}$\mathbb{E} \left[ \costt{C',P_j} \, \right] \leq (1+\veps/2) \cdot OPT $ for every group $P_j \in \{P_1,\dotsc,P_{\ell} \}$ \vspace*{2mm}\\
\hspace*{0.0mm} (1) \ \ \ $C' \gets \emptyset$ \quad \emph{ (center set)}  \\
\hspace*{0.0mm} (2) \ \ \ $P_{u} \gets P$ \quad \emph{(set of unassigned points)}  \\
\hspace*{0.0mm} (3) \ \ \ Repeat $k \ln (2 c \cdot n/\veps)$ times for some constant $c$\,:  \\
\hspace*{0.0mm} (4) \hspace*{0.3in} \ \ \ Sample a center $f \in F$ with probability $y_f/k$. Let $f^{*}$ be the sampled center. \\
\hspace*{0.0mm} (5) \hspace*{0.3in} \ \ \ $C' \gets C' \cup \{f^{*}\}$\\
\hspace*{0.0mm} (6) \hspace*{0.3in} \ \ \ For each point $p \in P_{u}$: \\
\hspace*{0.0mm} (7) \hspace*{0.3in} \hspace*{0.3in} \ \ \ Assign $p$ to $f^{*}$ with probability $x_{p,f^{*}}/y_{f^{*}}$  \\
\hspace*{0.0mm} (8) \hspace*{0.3in} \hspace*{0.3in} \ \ \ If $p$ assigned to $f^{*}$, then $P_u \gets P_u \setminus \{p\}$  \\ 
\hspace*{0.0mm} (9) \ \ \ Run $O(\ell)$-approximation algorithm for socially fair clustering problem on $P$.\\
\hspace*{0.0mm} \hspace*{0.35in} Let $C_u \subseteq F$ be the obtained center set. Assign the points in $P_u$ to $C_u$.\\
\hspace*{0.0mm} (10) \ \ $P_u \gets \emptyset$ \\
\hspace*{0.0mm} (11) \ \ $C' \gets C' \cup C_{u}$ \\
\hspace*{0.0mm} (12) \ \  return($C'$)
\end{framed}
\vspace*{-4mm}
\caption{A rounding procedure used as a subroutine for bi-criteria approximation.}
\label{algorithm:RS}
\end{Algorithm}

\noindent
The randomized subroutine is described in Algorithm~\ref{algorithm:RS}. The algorithm takes input the fractional optimal solution to the relaxed LP of the socially fair clustering problem. In Line $(1)$, the algorithm initializes a center set $C'$ as empty. In Line $(2)$, the algorithm initializes a set $P_{u}$ that denotes the set of unassigned points. Initially, no point is assigned to any center; therefore $P_{u}$ is initialized to $P$. Then, the algorithm proceeds in two phases:
\begin{list}{$\bullet$}{} 
    \item Phase $1$ constitutes Lines $(3)-(8)$ of the algorithm. In this phase, the algorithm samples a center from $F$ with probability distribution defined by $y_f/k$. Note that sum of probabilities over all $f \in F$, is $1$ due to constraint $(1)$ of the relaxed LP. Therefore, a center is always selected, and it added to $C'$. Suppose the selected center is $f^{*}$. Then, for each point $p \in P_{u}$, the algorithm independently assigns $p$ to $f^{*}$ with probability $x_{f^{*},p}/y_{f^{*}}$. The algorithm removes the points from $P_{u}$ that are assigned to $f^{*}$. The algorithm repeats this procedure $k \ln (2c \cdot n/\veps)$ times. Here, $c$ is a constant whose value will be defined later during the analysis of the algorithm. \\
    
    \item Phase $2$ constitutes line $(9)$ of the algorithm. In this phase, the algorithm runs an $O(\ell)$-approximation algorithm $\mathcal{A}$ for the socially fair clustering problem on the entire point set $P$. It is easy to design such an algorithm. In fact, any $O(1)$-approximation algorithm to the \emph{unconstrained} clustering problem is also an $O(\ell)$-approximation  algorithm to the socially fair clustering problem. For the sake of completeness, we prove this reduction in Appendix~\ref{appendix:ell_approx}. Let $C_{u}$ be the center set output by algorithm $\mathcal{A}$. The remaining points in $P_u$ are assigned to their closest centers in $C_{u}$. After this phase, all the points are assigned. Lastly, the algorithm returns all the centers selected in Phase $1$ and Phase $2$.
\end{list}
The details of the proof of Lemma~\ref{lemma:bicriteria} which mainly involves analysis of \RS\,is given in Appendix~~\ref{appendix:randomized_sub}.
We now apply standard probability amplification method to bound the fair-cost. We give the proof in Appendix~\ref{sec:amplification}.

\begin{lemma}\label{lemma:amplification}
Suppose \RS \, is repeated $r = \frac{8 \ln n}{\veps}$ times, independently. Let $C'_{1},\dotsc,C'_{r}$ be the obtained center sets for each call to the algorithm. Then, the center set $C \coloneqq C_{1}' \cup \dotsc \cup C_{r}'$ is a $(1+\veps)$ approximation to the optimal fair clustering cost of $P$, i.e., $\fair{C,P} \leq (1+\veps) \cdot OPT$, with probability at least $1-1/n$.
\end{lemma}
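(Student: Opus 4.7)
}
The plan is a standard Markov-plus-independent-repetition amplification argument, exploiting two facts: (i) taking a union of candidate center sets can only decrease the cost to any group, and (ii) Lemma~\ref{lemma:bicriteria} gives a per-group expectation bound that allows Markov's inequality to convert one-run expectations into a per-run tail bound.

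First, I would fix a group $P_j$ and a single run $i \in [r]$. By Lemma~\ref{lemma:bicriteria}, $\mathbb{E}[\costt{C_i',P_j}] \leq (1+\veps/2) \cdot OPT$. Applying Markov's inequality with threshold $(1+\veps)\cdot OPT$ yields
\[
\pr\!\left[\,\costt{C_i', P_j} > (1+\veps)\cdot OPT\,\right] \leq \frac{1+\veps/2}{1+\veps} \leq 1 - \frac{\veps}{4},
\]
where the last inequality uses $\veps \leq 1$ (expanding: $(1-\veps/4)(1+\veps) = 1 + 3\veps/4 - \veps^2/4 \geq 1 + \veps/2$). Since the $r$ runs of \RS\ are independent, the probability that \emph{every} run fails for this particular group $P_j$ is at most $(1-\veps/4)^r \leq e^{-\veps r/4} = e^{-2\ln n} = 1/n^{2}$ for $r = 8\ln n/\veps$.

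Next, I would union-bound over all $\ell$ groups. Using $\ell \leq |P| \leq n$, the probability that there exists some group $P_j$ for which every run $i$ fails is at most $\ell/n^2 \leq 1/n$. Hence, with probability at least $1 - 1/n$, for every $j \in [\ell]$ there is at least one index $i_j \in [r]$ with $\costt{C'_{i_j}, P_j} \leq (1+\veps)\cdot OPT$.

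Finally, I would use the monotonicity of cost under enlargement of the center set: for every point $p$, $d(C, p) = \min_{i} d(C_i', p) \leq d(C'_{i_j}, p)$, so $\costt{C, P_j} \leq \costt{C'_{i_j}, P_j} \leq (1+\veps)\cdot OPT$ for each $j$. Taking the maximum over $j$ gives $\fair{C, P} \leq (1+\veps)\cdot OPT$ on the good event, finishing the proof. The only step that requires a little care is the passage from $(1+\veps/2)/(1+\veps)$ to a clean exponential decay bound of the form $1 - \Theta(\veps)$, but this is an elementary inequality valid for $\veps \leq 1$; beyond that, the argument is purely mechanical.
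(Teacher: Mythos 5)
Your proposal is correct and follows essentially the same route as the paper's own proof: Markov's inequality applied to the per-group expectation bound of Lemma~\ref{lemma:bicriteria} to get the $1-\veps/4$ per-run failure probability, independence of the $r = 8\ln n/\veps$ repetitions to drive the failure probability per group down to $1/n^2$, a union bound over the $\ell \leq n$ groups, and monotonicity of the cost under taking the union $C = C_1' \cup \dotsc \cup C_r'$. The only cosmetic difference is the order in which the monotonicity step and the union bound are invoked, which does not affect the argument.
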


\subsection{Conversion: Bi-criteria to FPT Approximation} \label{subsection:2}

In this subsection, we convert the $\left( O\left( (k/\veps^2) \cdot \ln^{2} n \right),1+\veps \right)$ bi-criteria approximation algorithm to $(\rp{3}{\z}+\veps)$-approximation algorithm in FPT time.

\begin{lemma} \label{lemma:main}
Let $C = \{c_1,\dotsc,c_{\beta k} \} \subseteq F$ be any $(\alpha,\beta)$-approximate solution to the socially fair clustering instance $\mathcal{I} = (\X,P,P_1,\dotsc,P_{\ell},w_1,\dotsc,w_{\ell},F,d,k,z)$. Then, there exists a $k$ sized subset $C'$ of $C$ that is a $(\rp{3}{\z-1} \cdot (\alpha+2))$-approximate solution to $\mathcal{I}$. Moreover, given $C$, the center set $C'$ can be obtained in time $O((e\beta)^{\,k} \cdot nk)$.
\end{lemma}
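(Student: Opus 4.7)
\medskip
\noindent\textbf{Proof plan for Lemma~\ref{lemma:main}.}
The plan is to exhibit a concrete $k$-sized subset $C'' \subseteq C$ (depending on the unknown optimal center set $\Cst = \{\cpst{1},\dotsc,\cpst{k}\}$) whose fair cost is at most $3^{\z-1}(\alpha+2)\cdot OPT$, and then observe that the algorithm can discover some subset at least as good simply by brute-force enumeration of all $k$-subsets of $C$.

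\emph{Construction of $C''$.}  For each optimal center $\cpst{i}\in \Cst$, let $\pi(\cpst{i}) \in C$ be a center in the bi-criteria solution that is closest to $\cpst{i}$ (break ties arbitrarily).  Define $C'' \coloneqq \{\pi(\cpst{1}),\dotsc,\pi(\cpst{k})\} \subseteq C$; if $|C''|<k$, pad it with arbitrary centers of $C$ to get a $k$-subset (padding only decreases the clustering cost of any group).  For a point $p\in P_j$, let $\cst(p)$ denote the closest center to $p$ in $\Cst$ and $c(p)$ the closest center to $p$ in $C$.  Two applications of the ordinary triangle inequality give
\begin{align*}
d(C'',p) \;\le\; d(\pi(\cst(p)),p)
&\;\le\; d(\pi(\cst(p)),\cst(p)) + d(\cst(p),p)\\
&\;\le\; d(c(p),\cst(p)) + d(\cst(p),p) \\
&\;\le\; d(c(p),p) + d(p,\cst(p)) + d(\cst(p),p),
\end{align*}
where the third line used that $\pi(\cst(p))$ is the closest center to $\cst(p)$ in $C$, so $d(\pi(\cst(p)),\cst(p))\le d(c(p),\cst(p))$.

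\emph{Cost bound.}  Raising both sides to the $\z$-th power and invoking Fact~\ref{fact:1} (the approximate triangle inequality) yields
\[
d(C'',p)^{\zl} \;\le\; 3^{\,\z-1}\bigl(d(c(p),p)^{\zl} + 2\,d(\cst(p),p)^{\zl}\bigr).
\]
Multiplying by $w_j(p)$ and summing over $p\in P_j$,
\[
\costt{C'',P_j} \;\le\; 3^{\,\z-1}\bigl(\costt{C,P_j} + 2\,\costt{\Cst,P_j}\bigr)
\;\le\; 3^{\,\z-1}(\alpha+2)\cdot OPT,
\]
using that $C$ is $(\alpha,\beta)$-approximate and $\Cst$ is optimal.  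Taking the maximum over $j\in[\ell]$ gives $\fair{C'',P} \le 3^{\,\z-1}(\alpha+2)\cdot OPT$, establishing existence of the desired $k$-subset.

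\emph{Algorithmic part.}  Since $\Cst$ is unknown, the algorithm enumerates every $k$-sized subset of $C$, computes its fair clustering cost, and returns the subset of minimum fair cost; the output is at least as good as $C''$ and therefore inherits the $3^{\,\z-1}(\alpha+2)$ guarantee.  There are $\binom{\beta k}{k}\le (e\beta)^{k}$ subsets to try (using the standard estimate $\binom{n}{k}\le(en/k)^{k}$), and evaluating $\fair{\cdot,P}$ on a fixed $k$-subset takes $O(nk)$ time by scanning each of the $n$ points against each of the $k$ centers and then taking the maximum over the $\ell$ group sums; this yields total time $O((e\beta)^{k}\cdot nk)$, as required.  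The only mildly delicate step is tracking the constant $3^{\,\z-1}$ correctly via Fact~\ref{fact:1}; everything else is a direct triangle-inequality / enumeration argument, so I do not expect a real obstacle.
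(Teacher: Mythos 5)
Your proposal is correct and follows essentially the same route as the paper's proof: you project each optimal center $\cpst{i}$ to its nearest neighbor in $C$ (the paper's $f(\cpst{i})$, your $\pi(\cpst{i})$), bound each point's cost by two triangle inequalities plus Fact~\ref{fact:1} to get the $\rp{3}{\z-1}(\alpha+2)$ factor, and recover the subset by enumerating all $\binom{\beta k}{k}\le(e\beta)^k$ candidates. The only cosmetic differences are that you argue point-wise via $\cst(p)$ rather than through the explicit Voronoi partition $\mathbb{P}_j$, and you explicitly handle padding when the projected set has fewer than $k$ distinct centers, a detail the paper leaves implicit.
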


\begin{proof}
Let $C^{*} = \{c_{1}^{*},\dotsc,c_{k}^{*} \} \subseteq F$ be an optimal center set of $\mathcal{I}$. This set induces a Voronoi partitioning in each of the groups. 
We denote this partitioning using the notation $\mathbb{P}_{j} = \{P_j^1,P_j^2, \dotsc,P_j^k\}$ for the $j^{\textrm{th}}$ group. 
That is, $P_j^i$ are the set of those points in group $P_j$ for which the center $\cpst{i}$ is the closest.
For any point $x \in P \cup F$, let $f(x)$ denote the point in $C$ that is closest to $x$. That is, $f(x) \coloneqq \arg\min_{c \in C} \big\{ d(c,x) \big\}$. We define a new center set $C' \coloneqq \{f(c_{1}^{*}),\dotsc,f(c_{k}^{*}) \} \subseteq C$. We show that $C'$ is a $(\rp{3}{\z-1} \cdot (\alpha+2))$-approximate solution to $\mathcal{I}$. The proof follows from the following sequence of inequalities:


\begin{align*}
\fair{C',P} &= \max_{j \in [\ell]} {\left[ \costt{C',P_j}\right]} &&\\
&\leq \max_{j \in [\ell]}{\left[ \sum_{i=1}^{k} \costt{f(c_{i}^{*}), P_j^i}\right]} &&\\
&= \max_{j \in [\ell]}{\left[ \sum_{i=1}^{k} \sum_{x \in P_{j}^{i}} w_{j}(x) \cdot d (f(c_{i}^{*}),x)^{\zl}  \right]} &&\\
&\leq \max_{j \in [\ell]}{\left[ \sum_{i=1}^{k} \sum_{x \in P_{j}^{i}} w_{j}(x) \cdot \left( d(x,\cpst{i}) + d(\cpst{i},f(c_{i}^{*})) \right)^{\zl} \right]}, \quad && \textrm{(using triangle inequality)}\\
&\leq \max_{j \in [\ell]}{\left[ \sum_{i=1}^{k} \sum_{x \in P_{j}^{i}} w_{j}(x) \cdot \left( d(x,\cpst{i}) + d(\cpst{i},f(x)) \right)^{\zl} \right]}, \quad && \textrm{(from definition of $f(c_{i}^{*})$)}\\
&\leq \max_{j \in [\ell]}{\left[ \sum_{i=1}^{k} \sum_{x \in P_{j}^{i}} w_{j}(x) \cdot \left( d(x,\cpst{i}) + d(x,\cpst{i}) + d(x,f(x)) \right)^{\zl} \right]}, \quad && \textrm{(using triangle inequality)}\\
\end{align*}
\vspace{-13mm}
\begin{align*}
\hspace{25mm} &\leq \max_{j \in [\ell]}{\left[ \rp{3}{\z-1} \cdot \sum_{i=1}^{k} \sum_{x \in P_{j}^{i}} w_{j}(x) \cdot \left( d(x,\cpst{i})^{\zl} + d(x,\cpst{i})^{\zl} + d(x,f(x))^{\zl} \right)  \right]}, \quad \textrm{(using Fact~\ref{fact:1})}\\
&\leq \max_{j \in [\ell]}{\left[ \rp{3}{\z-1} \cdot  2 \cdot \sum_{i=1}^{k} \sum_{x \in P_{j}^{i}} w_{j}(x) \cdot  d(x,\cpst{i})^{\zl}  \right]} + \max_{j \in [\ell]}{\left[ \rp{3}{\z-1} \cdot \sum_{i=1}^{k} \sum_{x \in P_{j}^{i}} w_{j}(x) \cdot d(x,f(x))^{\zl}   \right]}\\
\end{align*}
\vspace{-13mm}
\begin{align*}
\hspace{17mm} &=  \max_{j \in [\ell]}{\left[ 2 \cdot \rp{3}{\z-1} \cdot \sum_{i=1}^{k} \costt{\cpst{i}, P_j^i}\right]} +  \max_{j \in [\ell]}{\left[ \rp{3}{\z-1} \cdot \costt{C,P_{j}} \right]}&&\\
&\leq  2 \cdot \rp{3}{\z-1} \cdot OPT +  \max_{j \in [\ell]}{\left[ \rp{3}{\z-1} \cdot \costt{C,P_{j}} \right]}&&\\
&\leq 2 \cdot \rp{3}{\z-1} \cdot OPT + \rp{3}{\z-1} \cdot \alpha \cdot OPT,  \quad \quad \textrm{($\because C$ is an $\alpha$-approximate solution)} &&\\
&= \rp{3}{\z-1} \cdot (\alpha+2) \cdot OPT.
\end{align*}
This proves that $C'$ is a $(\rp{3}{\z-1} \cdot (\alpha+2))$-approximate solution to $\mathcal{I}$. 

Now, we find the center set $C'$ using $C$. Since, we do not know $C^{*}$, we can not directly find $C'$. Therefore, we take all possible $k$ sized subsets of $C$ and compute the fair clustering cost for each of them. We output that center set that gives the least fair clustering cost.
There are at most ${\beta k \choose k} \leq (e\beta)^{ \, k}$ possibilities
\footnote{Here, we use a well known inequality that ${n \choose k} \leq (\frac{e  \cdot n}{k})^{ \, k}$.} of $C'$. 
And, for each such center set, the fair clustering cost can be computed in $O(nk)$ time using the Voronoi partitioning algorithm. Therefore, the overall running time is $O((e\beta)^{k} \cdot nk)$. This completes the proof of the lemma. \qed
\end{proof}

In the above lemma, we substitute the bi-criteria approximation algorithm that we designed in the previous subsection. It had $\alpha = 1+\veps$ and $\beta = O\left( (k/\veps^2) \cdot \ln^{2} n \right)$. Moreover, we set $\veps = \veps'/\rp{3}{\z-1}$ for some constant $\veps'\leq 1$. Then, the above lemma gives the following main result for the fair clustering problem:

\begin{corollary} [Main Result]
Let $z \geq 1$ and $\veps'$ be any constant $\leq 1$. Let $\mathcal{I} = (\X,P,P_1,\dotsc,P_{\ell},w_1,$ $\dotsc,w_{\ell},F,d,k,z)$ be any instance of the socially fair clustering problem. Then, there is an algorithm that outputs a $(\rp{3}{\z}+\veps')$-approximate solution for $\mathcal{I}$ with probability at least $1-1/n$. The running time of the algorithm is $\left( k/\veps' \right) ^{O(k)} \cdot n^{O(1)}$, which is FPT in $k$.
\end{corollary}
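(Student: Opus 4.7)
The plan is to combine Theorem~\ref{theorem:bi-criteria} and Lemma~\ref{lemma:main} directly, with a careful choice of the internal accuracy parameter so that the $\rp{3}{\z-1}$ factor introduced by the conversion is absorbed into $\veps'$. Set $\veps \coloneqq \veps'/\rp{3}{\z-1}$; since $z$ is treated as a constant, $\rp{3}{\z-1}$ is a constant and $\veps = \Theta(\veps')$.

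First I would invoke Theorem~\ref{theorem:bi-criteria} with accuracy $\veps$ on $\mathcal{I}$; with probability at least $1-1/n$ this returns a center set $C \subseteq F$ of size $\beta k$, where $\beta = O\!\left((k/\veps^{2})\cdot\ln^{2} n\right)$, satisfying $\fair{C,P} \leq (1+\veps)\cdot OPT$, i.e., a $(1+\veps, \beta)$ bi-criteria solution. Next I would feed this $C$ into Lemma~\ref{lemma:main}: the lemma enumerates all $\binom{\beta k}{k}$ size-$k$ subsets of $C$, computes the fair clustering cost of each via Voronoi assignment, and returns the subset $C'$ of least cost. By the lemma, $C'$ is a $\rp{3}{\z-1}\cdot((1+\veps)+2) = \rp{3}{\z-1}\cdot(3+\veps) = \rp{3}{\z} + \rp{3}{\z-1}\veps = \rp{3}{\z}+\veps'$ approximation to $\mathcal{I}$. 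The success probability is inherited entirely from the bi-criteria step, since the conversion is deterministic.

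The main obstacle is bounding the running time. Theorem~\ref{theorem:bi-criteria} contributes a polynomial overhead, but Lemma~\ref{lemma:main} contributes $O((e\beta)^{k}\cdot nk) = O\!\bigl((k/\veps^{2})^{k}\cdot(\ln n)^{2k}\cdot nk\bigr)$, and the $(\ln n)^{2k}$ factor is not polynomial in $n$ a priori. I would dispose of it via the elementary inequality $(\ln n)^{2k} \leq (2k)^{2k}\cdot n$: setting $t \coloneqq \ln n/(2k)$ this reduces to $\ln t \leq t$ for $t \geq 1$ (and is trivial when $\ln n < 2k$, as the left-hand side is then non-positive after taking logarithms). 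Substituting this bound collapses $(e\beta)^{k}$ to $(k/\veps)^{O(k)}\cdot n$, and using $\veps = \Theta(\veps')$ together with $z = O(1)$ rewrites it as $(k/\veps')^{O(k)}\cdot n$. Adding the polynomial overhead from the LP solve in the bi-criteria step and the trailing $nk$ factor yields the stated $(k/\veps')^{O(k)}\cdot n^{O(1)}$ running time, completing the proof.
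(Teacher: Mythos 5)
Your proposal is correct and follows essentially the same route as the paper: invoke Theorem~\ref{theorem:bi-criteria} with internal accuracy $\veps = \veps'/\rp{3}{\z-1}$, apply Lemma~\ref{lemma:main} with $\alpha = 1+\veps$ and $\beta = O((k/\veps^{2})\ln^{2} n)$, and absorb the $(\ln n)^{O(k)}$ factor via the standard bound $(\ln n)^{k} = k^{O(k)} \cdot n$. The only difference is that you spell out the approximation arithmetic and the elementary inequality in more detail than the paper does, which is harmless.
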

\begin{proof}
It is easy to see that the approximation guarantee of the algorithm is $(\rp{3}{\z} + \veps')$ since $\alpha = 1+\veps$ and $\veps = \veps'/\rp{3}{\z-1}$. 
Note that the bi-criteria approximation algorithm has polynomial running time. Furthermore, converting the bi-criteria approximation algorithm to FPT approximation algorithm requires $O((e (k/\veps^2) \cdot \ln^{2} n)^{k} \cdot nk)$ time since $\beta =O\left( (k/\veps^2) \cdot \ln^{2} n \right)$. Using a standard inequality that $(\ln n)^{k} = k^{O(k)} \cdot n$, we get total running time of $\left( k/\veps' \right) ^{O(k)} \cdot n^{O(1)}$, which is FPT in $k$. Hence proved.\qed
\end{proof}

\section{FPT Lower Bounds}\label{section:lower_bound}
In this section, we establish the FPT hardness of approximation results for the socially fair clustering problem. For this, we use the hardness results of the unconstrained $k$-median and $k$-means problems. Firstly, note the following result from~\cite{fpt:2020_Pasin_SODA}. 

\begin{theorem}[Corollary 3 of~\cite{fpt:2020_Pasin_SODA}]\label{theorem:hardness_kmedian}
For any constant $\veps>0$ and any function $g \colon \mathbb{R}^{+} \to \mathbb{R}^{+}$, the $k$-median and $k$-means problems can not be approximated to factors $(1+2/e-\veps)$ and $(1+8/e-\veps)$, respectively, in time $g(k) \cdot
n^{o(k)}$, assuming Gap-ETH. 
\end{theorem}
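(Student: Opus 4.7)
The plan is to derive Theorem~\ref{theorem:hardness_kmedian} as a metric gap-reduction from the Max-$k$-Coverage problem, for which matching Gap-ETH hardness is already known. An instance of Max-$k$-Coverage consists of a universe $U$ with $|U|=n$, sets $S_1,\dotsc,S_m \subseteq U$, and an integer $k$; the task is to choose $k$ sets maximising $|\bigcup_{i \in I} S_i|$. Under Gap-ETH, no algorithm running in time $g(k) \cdot n^{o(k)}$ can distinguish YES instances (where some $k$ sets cover all of $U$) from NO instances (where every collection of $k$ sets covers at most $(1-1/e+\eta)\,n$ points), for any fixed constant $\eta>0$; this is the main inapproximability statement of~\cite{fpt:2020_Pasin_SODA}, which I would invoke as a black box.

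Given such an instance, I would construct a metric $k$-median/$k$-means instance as follows. Let $P=U$ and $F=\{c_1,\dotsc,c_m\}$, one centre per candidate set. Define $d(u,c_i)=1$ if $u \in S_i$, $d(u,c_i)=3$ otherwise, and set $d(u,v)=d(c_i,c_j)=2$ for all distinct points and centres; a routine case analysis verifies the triangle inequality. The parameter $k$ is preserved verbatim. In the YES case the centres corresponding to the $k$ covering sets place every point at distance exactly $1$, so the optimum is $n$ for both $k$-median and $k$-means. In the NO case, any choice of $k$ centres puts at most $(1-1/e+\eta)n$ points at distance $1$ and the remaining $\geq (1/e-\eta)n$ points at distance at least $3$, so that the optimum is at least
\[
n + \frac{2n}{e} - 2\eta n \quad \text{for $k$-median}, \qquad n + \frac{8n}{e} - 8\eta n \quad \text{for $k$-means}.
\]
Choosing $\eta$ small enough as a function of $\veps$ gives multiplicative YES/NO gaps of $(1+2/e-\veps)$ and $(1+8/e-\veps)$, respectively.

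Since the reduction runs in polynomial time and keeps $k$ unchanged, any $g(k) \cdot n^{o(k)}$-time algorithm beating these approximation factors would decide the gap version of Max-$k$-Coverage in time $g'(k) \cdot n^{o(k)}$, contradicting Gap-ETH. The actual obstacle is not the reduction above but the Gap-ETH lower bound for Max-$k$-Coverage at the tight $(1-1/e)$-threshold itself: this is the technically intricate ingredient, proved in~\cite{fpt:2020_Pasin_SODA} through a chain of label-cover and dense-subgraph style reductions, and I would import it as a black box. The metric embedding is a standard Guha--Khuller-style construction whose correctness reduces, as above, to short arithmetic checks in each of the YES and NO cases.
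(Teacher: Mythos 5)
Your reduction is correct, but note that the paper does not prove this statement at all: it imports it verbatim as Corollary~3 of~\cite{fpt:2020_Pasin_SODA}, and your reconstruction is exactly how that corollary is obtained there (Gap-ETH hardness of Max-$k$-Coverage at the $(1-1/e)$ threshold with $g(k)\cdot n^{o(k)}$ running time, composed with the standard element/set metric embedding of Cohen-Addad~et~al.\ with distances $1$ and $3$). So your argument matches the cited source's approach, with the genuinely hard ingredient correctly identified and imported as a black box.
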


\noindent It is easy to see that for $\ell = 1$, the socially fair $k$-median/$k$-means problem is equivalent to the unconstrained $k$-median/$k$-means problem. This gives the following hardness result for the socially fair $k$-median/$k$-means problem.
\begin{theorem} [FPT Hardness for Parameters: $\ell$ and $k$]\label{theorem:hardness_parameter_k_l}
For any constant $\veps>0$, and functions: $g \colon \mathbb{R}^{+} \to \mathbb{R}^{+}$ and $f  \colon \mathbb{R}^{+} \to \mathbb{R}^{+}$, the socially fair $k$-median and $k$-means problems can not be approximated to factors $(1+2/e-\veps)$ and $(1+8/e-\veps)$, respectively, in time $g(k) \cdot n^{f(\ell) \cdot o(k) }$, assuming Gap-ETH.
\end{theorem}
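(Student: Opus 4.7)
The plan is to derive Theorem~\ref{theorem:hardness_parameter_k_l} as an immediate consequence of Theorem~\ref{theorem:hardness_kmedian} via a trivial reduction that collapses the number of groups to one. The key observation is that the socially fair clustering problem with $\ell = 1$ is literally the same as the unconstrained $k$-median/$k$-means problem, so any algorithm that beats the claimed approximation factor for socially fair clustering with a suitably restricted running time yields an algorithm of the same quality for unconstrained clustering, contradicting the known Gap-ETH hardness.

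Concretely, first I would suppose for contradiction that there is an algorithm $\mathcal{A}$ that approximates socially fair $k$-median (resp.\ $k$-means) to a factor $(1+2/e-\veps)$ (resp.\ $(1+8/e-\veps)$) and runs in time $g(k) \cdot n^{f(\ell) \cdot o(k)}$ for some functions $g, f$ and some constant $\veps > 0$. Next I would describe the reduction: given an arbitrary unconstrained $k$-median (resp.\ $k$-means) instance with point set $P$, feasible centers $F$, and metric $d$, construct the socially fair clustering instance $\mathcal{I} = (\X, P, P_1, w_1, F, d, k, z)$ with $\ell = 1$, $P_1 = P$, uniform weights $w_1 \equiv 1$, and $z = 1$ (resp.\ $z = 2$). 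For any center set $C \subseteq F$ of size $k$, we have $\fair{C, P} = \costt{C, P_1} = \costt{C, P}$, which is exactly the unconstrained $k$-median (resp.\ $k$-means) cost. Hence any $\alpha$-approximate solution to $\mathcal{I}$ returned by $\mathcal{A}$ is an $\alpha$-approximate solution to the original unconstrained instance.

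Then I would analyze the running time. When applied to $\mathcal{I}$, the algorithm $\mathcal{A}$ runs in time $g(k) \cdot n^{f(1) \cdot o(k)}$. Since $f(1)$ is a fixed constant (independent of $n$ and $k$), this simplifies to $g(k) \cdot n^{o(k)}$. Thus we obtain an algorithm for the unconstrained $k$-median (resp.\ $k$-means) problem with approximation factor $(1+2/e-\veps)$ (resp.\ $(1+8/e-\veps)$) and running time $g(k) \cdot n^{o(k)}$, contradicting Theorem~\ref{theorem:hardness_kmedian}.

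There is essentially no technical obstacle here; the only subtlety worth spelling out is why the bound $n^{f(\ell) \cdot o(k)}$ in the hypothesized running time degenerates to $n^{o(k)}$ when $\ell = 1$, which hinges on $f(1)$ being a constant factor that can be absorbed into the $o(k)$ term. The reduction itself is immediate, and no new approximation-preserving machinery is required.
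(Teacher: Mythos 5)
Your proposal is correct and matches the paper's own proof: the paper likewise assumes a $g(k) \cdot n^{f(\ell) \cdot o(k)}$-time algorithm, sets $\ell = 1$ to recover the unconstrained $k$-median/$k$-means problem, and notes that $f(1)$ is a constant so the running time becomes $g(k) \cdot n^{o(k)}$, contradicting Theorem~\ref{theorem:hardness_kmedian}. Your write-up just spells out the reduction and the absorption of $f(1)$ into the $o(k)$ exponent a bit more explicitly than the paper does.
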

\begin{proof}
For the sake of contradiction, assume that there exists a constant $\veps>0$, and functions: $g \colon \mathbb{R}^{+} \to \mathbb{R}^{+}$ and $f \colon \mathbb{R}^{+} \to \mathbb{R}^{+}$ such that the socially fair $k$-median and $k$-means problems can be approximated to factors $(1+2/e-\veps)$ and $(1+8/e-\veps)$, respectively, in time $g(k) \cdot n^{f(\ell) \cdot o(k) }$. Then, for $\ell = 1$, it implies that the $k$-median and $k$-means problems can be approximated to factors $(1+2/e-\veps)$ and $(1+8/e-\veps)$, respectively, in time $g(k) \cdot n^{o(k)}$. This contradicts Theorem~\ref{theorem:hardness_kmedian}. Hence proved. 
\end{proof}

The above hardness result assumed the parametrization by $k$ and $\ell$. Now we show stronger hardness result for the problem when it is parameterized by $k$ alone. We show this using a reduction from the \emph{$k$-supplier} problem. The $k$-supplier problem is defined as follows:
\begin{definition}[$k$-Supplier]
Let $z$ be any positive real number and $k$ be any positive integer. Given a set $P$ of points and set $F$ of feasible centers in a metric space $(\X,d)$, find a set $C \subseteq F$ of $k$ centers that minimizes the objective function $\Phi(C,P)$ defined as follows:
\[
\Phi(C,P) \equiv  \max_{x \in P} \Big\{ d(C,x)^{\zl} \Big\} , \quad \textrm{where} \quad d(C,x) = \min_{c \in C} \{ d(c,x) \}
\]
\end{definition}

\noindent Hochbaum and Shmoys~\cite{Supplier:1986_Shmoys} showed that for $z = 1$, the $k$-supplier problem is $\mathsf{NP}$-hard to approximate to any factor smaller than $3$. The proof follows from the reduction from the \emph{hitting set problem} (see Theorem 6 of~\cite{Supplier:1986_Shmoys}). A similar reduction is possible from the \emph{set coverage problem}. For the sake of completeness, we describe the reduction in Appendix~\ref{appendix:reduction}. The reduction gives the following FPT hardness of approximation result for the $k$-supplier problem.

\begin{theorem}\label{theorem:hardness_supplier}
For any $\veps>0$ and any function $g \colon \mathbb{R}^{+} \to \mathbb{R}^{+}$, the $k$-supplier problem can not be approximated to factor $(\rp{3}{\z} - \veps)$ in time $g(k) \cdot n^{o(k)}$, assuming Gap-ETH.
\end{theorem}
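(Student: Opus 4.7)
The plan is to combine the gap-preserving reduction from exact $k$-Set Cover (equivalently $k$-Dominating Set) to $k$-supplier described in Appendix~\ref{appendix:reduction} with the classical $n^{\Omega(k)}$ lower bound for these source problems under ETH, which is implied by Gap-ETH.

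The reduction, at a high level, takes an instance $(U, \mathcal{F}, k)$ of $k$-Set Cover and builds a bipartite metric $k$-supplier instance on $P = U$, $F = \mathcal{F}$, setting $d(u, S) = 1$ when $u \in S$ and $d(u, S) = 3$ otherwise, with intra-$P$ and intra-$F$ distances defined so as to satisfy the triangle inequality; the parameter $k$ and the instance size are preserved up to polynomial factors. The key observation is the decision gap: if a $k$-cover exists, selecting the corresponding $k$ sets as centers yields $\max_{u \in P} d(C,u)^{\zl} = 1$, while if no $k$-cover exists then for any $C \subseteq F$ with $|C| = k$ some element $u \in P$ lies outside every $S \in C$, pushing the $k$-supplier cost to at least $\rp{3}{\z}$. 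Hence any $(\rp{3}{\z} - \veps)$-approximation algorithm for $k$-supplier must distinguish YES and NO instances of $k$-Set Cover.

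To finish, I would compose such an approximation algorithm, running in time $g(k) \cdot n^{o(k)}$, with the reduction above to obtain an exact decision algorithm for $k$-Set Cover in the same time. This contradicts the Chen--Huang--Kanj--Xia lower bound of $n^{\Omega(k)}$ for $k$-Set Cover (equivalently $k$-Dominating Set) under ETH, and ETH is in turn implied by Gap-ETH, delivering the desired contradiction.

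The main obstacle is conceptual rather than technical: identifying a source problem whose \emph{exact} (not approximate) decision version is already $n^{o(k)}$-hard under the assumed complexity hypothesis, so that the $\rp{3}{\z}$ gap baked into the reduction is automatically sufficient. Exact $k$-Set Cover fits this role perfectly, and no subtlety arises from the exponent $z$: the reduction produces the raw distance gap $1$ vs $3$, which becomes a cost gap of $1$ vs $\rp{3}{\z}$ simply by the definition of the $k$-supplier objective. The remaining triangle-inequality verification for the constructed metric reduces to a routine finite check.
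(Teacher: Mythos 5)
Your reduction is exactly the one the paper uses (Appendix~\ref{appendix:reduction}): elements become points, sets become centers, distance $1$ when the element is in the set and $3$ otherwise (with intra-$P$/intra-$F$ distance $2$), giving the $1$ vs.\ $\rp{3}{\z}$ cost gap, so any $(\rp{3}{\z}-\veps)$-approximation for $k$-supplier decides $k$-set cover exactly. The only place you diverge is in the hardness you invoke for the source problem: the paper plugs in Manurangsi's Gap-ETH-based lower bound for the set coverage problem (Corollary 2 and Theorem 25 of~\cite{fpt:2020_Pasin_SODA}), stated directly as ``no $g(k)\cdot n^{o(k)}$ algorithm for set coverage under Gap-ETH,'' whereas you invoke the Chen--Huang--Kanj--Xia $f(k)\cdot n^{o(k)}$ lower bound for exact $k$-Dominating Set\,/\,$k$-Set Cover under ETH and then observe that Gap-ETH implies ETH. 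Both chains are sound, since the reduction only needs hardness of the \emph{exact} decision version; your variant is arguably more self-contained and in fact yields a formally stronger statement (the same inapproximability already under ETH), while the paper's choice keeps a single hypothesis (Gap-ETH) uniform across all of its lower bounds, including Theorem~\ref{theorem:hardness_k_ell} where the $(1+2/e-\veps)$ and $(1+8/e-\veps)$ factors genuinely need Gap-ETH. One small caveat: the Chen et al.\ bound is stated for computable $f$, whereas the theorem as phrased quantifies over arbitrary $g\colon \mathbb{R}^{+}\to\mathbb{R}^{+}$; this is a cosmetic mismatch shared by essentially all such statements and does not affect the argument.
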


\noindent Using this, we show the following hardness result for the socially fair clustering problem.
\begin{theorem}[FPT Hardness for Parameter $k$]\label{theorem:hardness_parameter_k}
For any $z \geq 0$, $\veps>0$, and any function $g \colon \mathbb{R}^{+} \to \mathbb{R}^{+}$, the socially fair clustering problem can not be approximated to factor $(\rp{3}{\z} - \veps)$ in time $g(k) \cdot n^{o(k)}$, assuming Gap-ETH.
\end{theorem}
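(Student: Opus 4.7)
The plan is to reduce from the $k$-supplier problem and invoke Theorem~\ref{theorem:hardness_supplier}. The key insight is that a ``max over points'' objective (as in $k$-supplier) can be simulated by a ``max over groups'' objective (as in socially fair clustering) if every group is a singleton of weight $1$.

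\textbf{Reduction.} Given any $k$-supplier instance $\mathcal{J} = (\X, P, F, d, k, z)$ with $P = \{p_1, \dotsc, p_m\}$, I would construct a socially fair clustering instance $\mathcal{I} = (\X, P, P_1, \dotsc, P_m, w_1, \dotsc, w_m, F, d, k, z)$ by setting $\ell = m = |P|$, declaring $P_j \coloneqq \{p_j\}$, and defining the weight functions $w_j(p_j) \coloneqq 1$ for every $j \in [m]$. The ground set, feasible-center set, metric, $k$, and $z$ are all copied verbatim, so $|P \cup F|$ is unchanged and the reduction runs in polynomial time.

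\textbf{Objective preservation.} For any center set $C \subseteq F$ of size $k$, the fair cost is
\[
\fair{C, P} \;=\; \max_{j \in [m]} \costt{C, P_j} \;=\; \max_{j \in [m]} d(C, p_j)^{\zl} \;=\; \max_{x \in P} d(C, x)^{\zl} \;=\; \Phi(C, P),
\]
which is exactly the $k$-supplier objective on $\mathcal{J}$. Hence an optimal solution for $\mathcal{I}$ is an optimal solution for $\mathcal{J}$, and any $\alpha$-approximate solution for $\mathcal{I}$ is an $\alpha$-approximate solution for $\mathcal{J}$ (for the same centers).

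\textbf{Conclusion via contrapositive.} Suppose, for contradiction, that for some $z \geq 0$, some $\veps > 0$, and some function $g$ there exists a $(\rp{3}{\z} - \veps)$-approximation algorithm for socially fair clustering running in time $g(k) \cdot n^{o(k)}$. Given any $k$-supplier instance $\mathcal{J}$ on $n$ points, I would apply the reduction to obtain $\mathcal{I}$ (whose ground-set size is still $n$) and run the hypothesized algorithm on $\mathcal{I}$. By the objective-preservation step the output is a $(\rp{3}{\z} - \veps)$-approximation for $\mathcal{J}$, produced in time $g(k) \cdot n^{o(k)}$. This contradicts Theorem~\ref{theorem:hardness_supplier}, completing the proof. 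The only thing to be careful about is that the parameter $k$ is preserved exactly (it is) and that $n$ does not blow up under the reduction (it does not, since we simply repartition the existing point set into singletons); there is no real obstacle beyond verifying these bookkeeping facts.
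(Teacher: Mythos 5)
Your proposal is correct and follows essentially the same route as the paper: both arguments turn each point into a singleton group of unit weight so that the fair cost coincides with the $k$-supplier objective, and then transfer the hardness from Theorem~\ref{theorem:hardness_supplier}. Your write-up merely makes explicit the contrapositive/bookkeeping steps (preservation of $k$ and $n$) that the paper leaves implicit.
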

\begin{proof}
We prove this result by showing that the $k$-supplier problem is a special case of the socially fair clustering problem.
Let $\mathcal{I} = (P,P_1,\dotsc,P_{\ell},w_1,\dotsc,w_{\ell},F,d,k,z)$ be an instance of the socially fair clustering problem defined in the following manner. The number of groups is the same as the number of points, i.e., $\ell = |P|$. For every point $x \in P$, we define a singleton group $P_x$ as  $P_{x} \coloneqq \{x\}$. Let the weight function be defined as $w_{x} \colon P_{x} \to 1$ for every $x \in P$, i.e., each point carries a unit weight. Then, for any center set $C \subseteq F$, the fair clustering cost of $P$ is:
\[
\fair{C} = \max_{x \in P} \Big\{ \costt{C,P_{x}} \Big\}= \max_{x \in P} \Big\{ d(C,x)^{\zl} \Big\}
\]
Recall that $\max_{x \in P} \Big\{ d(C,x)^{\zl} \Big\}  \equiv \Phi(C,P)$ is the $k$-supplier cost of the instance $(P,F,d,k)$. It means that the fair cost of instance $\mathcal{I}$ is the same as the $k$-supplier cost of the instance $(P,F,d,k)$. 
Therefore, the $k$-supplier problem is a special case of the socially fair clustering problem. Therefore, the hardness result stated in Theorem~\ref{theorem:hardness_supplier} also holds for the socially fair clustering problem. Hence proved.
\end{proof}

\noindent The following are the two corollaries that immediately follow from the above theorem.
\begin{corollary}[$k$-median]
For any $\veps>0$ and any computable function $g(k)$, the socially fair $k$-median problem can not be approximated to factor $(3-\veps)$ in time $g(k) \cdot n^{o(k)}$, assuming Gap-ETH.
\end{corollary}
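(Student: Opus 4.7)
The plan is to derive this corollary as an immediate specialization of Theorem~\ref{theorem:hardness_parameter_k}. Recall that the socially fair $k$-median problem is, by definition, the socially fair clustering problem with the exponent parameter fixed to $z = 1$. Therefore, any hardness statement proved for the socially fair clustering problem uniformly over $z$ specializes to a hardness statement for socially fair $k$-median simply by substituting $z = 1$.

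Concretely, I would argue as follows. Suppose, for the sake of contradiction, that there is some $\veps > 0$ and some computable function $g$ such that the socially fair $k$-median problem admits a $(3-\veps)$-approximation in time $g(k) \cdot n^{o(k)}$. Since socially fair $k$-median is the $z=1$ instance of socially fair clustering, this yields an algorithm for socially fair clustering with $z=1$ achieving approximation factor $(\rp{3}{1} - \veps) = (3-\veps)$ in time $g(k) \cdot n^{o(k)}$. But Theorem~\ref{theorem:hardness_parameter_k}, applied with $z = 1$ and the same $\veps$ and $g$, rules out precisely such an algorithm under Gap-ETH. This contradiction establishes the corollary.

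The proof requires no nontrivial step: the only thing to verify is the purely notational observation that $\rp{3}{\z}\big|_{z=1} = 3$, and that socially fair $k$-median is literally the $z=1$ case. There is no obstacle worth flagging. If one wanted to make the reduction completely explicit, one could note that an instance $\mathcal{I}$ of socially fair $k$-median is already a valid socially fair clustering instance once we record $z = 1$ as part of the tuple, and the optimal values and approximation ratios coincide under this identification; but this is tautological.
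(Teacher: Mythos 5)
Your proposal is correct and matches the paper's reasoning: the corollary is stated as an immediate consequence of Theorem~\ref{theorem:hardness_parameter_k}, obtained by specializing to $z=1$ (so the hardness factor $3^{z}-\veps$ becomes $3-\veps$), which is exactly your argument. Nothing further is needed.
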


\begin{corollary}[$k$-means]
For any $\veps>0$ and any computable function $g(k)$, the socially fair $k$-means problem can not be approximated to factor $(9-\veps)$ in time $g(k) \cdot n^{o(k)}$, assuming Gap-ETH. 
\end{corollary}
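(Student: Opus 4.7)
The plan is to derive this corollary by direct specialization of Theorem~\ref{theorem:hardness_parameter_k} to the parameter value $z=2$. Recall that the socially fair $k$-means problem is precisely the socially fair clustering problem restricted to $z=2$, since the $k$-means cost is defined using squared distances $d(\cdot,\cdot)^{2}$. Thus any instance of the socially fair $k$-means problem is an instance of the socially fair clustering problem with $z=2$, and the approximation factors coincide.

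Substituting $z=2$ into the general FPT hardness bound $(\rp{3}{\z}-\veps)$ from Theorem~\ref{theorem:hardness_parameter_k} yields the threshold $3^{2}-\veps = 9-\veps$. Therefore, for any $\veps>0$ and any computable $g \colon \mathbb{R}^{+}\to\mathbb{R}^{+}$, a hypothetical $(9-\veps)$-approximation algorithm for socially fair $k$-means running in time $g(k)\cdot n^{o(k)}$ would immediately yield a $(\rp{3}{2}-\veps)$-approximation for the socially fair clustering problem at $z=2$ in the same running time, contradicting Theorem~\ref{theorem:hardness_parameter_k} under Gap-ETH.

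The main obstacle in this argument is essentially nonexistent: everything has been done in the more general Theorem~\ref{theorem:hardness_parameter_k}, whose proof routed through the $k$-supplier reduction of Theorem~\ref{theorem:hardness_supplier}. The only subtlety worth flagging is that the hardness of Theorem~\ref{theorem:hardness_parameter_k} is proved by embedding the $k$-supplier instance via the singleton-group construction, and this construction is independent of $z$; in particular it works verbatim at $z=2$ to produce a socially fair $k$-means instance whose fair cost equals the $k$-supplier cost with squared distances. Hence no new calculation is needed, and the corollary follows by direct instantiation.
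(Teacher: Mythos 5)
Your proof is correct and matches the paper's reasoning exactly: the paper treats this corollary as an immediate instantiation of Theorem~\ref{theorem:hardness_parameter_k} at $z=2$, where $\rp{3}{\z}$ becomes $9$. Nothing further is needed.
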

\section{Conclusion}
We designed a constant factor approximation algorithm for the socially fair $k$-median/$k$-means problem in FPT time. In addition to it, we gave tight FPT hardness of approximation bound using the observation that the $k$-supplier problem is a special case of the socially fair clustering problem. This settles the complexity of the problem when parameterized by $k$.
A natural open question is to obtain better approximation guarantees parameterized by both $k$ and $\ell$, or $\ell$ alone.
\section*{Acknowledgement}
Thanks to Karthik C.S. for pointing us to the work of Pasin Manurangsi~\cite{fpt:2020_Pasin_SODA}.

\addcontentsline{toc}{section}{References}
\bibliographystyle{alpha}
\bibliography{references}

\appendix

\section{Randomized Subroutine (Proof of Lemma~\ref{lemma:bicriteria})}\label{appendix:randomized_sub}

Here, we analyse the rounding procedure \RS.
We bound the expected assignment cost of each group $P_{j}$ with respect to $C'$, the center set returned by \RS. 
Let the centers selected in Phase $1$ of the algorithm are: $C_{f} \coloneqq \{ f_{1}^{*},\dotsc,f_{t}^{*} \}$, for $t = k \ln (2c \cdot n/\veps)$. In other words, $f_{i}^{*}$ is the center sampled in the $i^{th}$ iteration of Line $(3)$ of the subroutine. For each point $p \in P_j$ and iteration $i \in \{1,\dotsc,t\}$, we define a random variable $A^{i}_{p}$. It takes value $1$ if $p$ is unassigned after $i$ iterations; otherwise it is $0$. In other words, $A^{i}_{p} = 1$ if $p \in P_{u}$ after $i$ iterations. Given $A^{i}_{p} = 1$, the probability that $p$ is assigned to some center in $(i+1)^{th}$ iteration is: 

\begin{align*}
\textbf{Pr}[A^{i+1}_{p} = 0 \mid A^{i}_{p} = 1] &= \sum_{f \in F} \textbf{Pr}[\textrm{$p$ is assigned to $f$} \mid f^{*}_{i+1} = f] \cdot \textbf{Pr}[f^{*}_{i+1} = f ]
\end{align*}
\begin{equation}\label{equation:probability}
\hspace{1cm} = \sum_{f \in F} \frac{x_{f,p}}{y_{f}} \cdot \frac{y_{f}}{k} = \frac{\sum_{f \in F} x_{f,p}}{k} = \frac{1}{k} 
\end{equation}

\noindent The last equality follows from the second constraint of relaxed LP. Also, note that  $\textbf{Pr} [A^{1}_{p} = 0] = \frac{1}{k}$ using the above same analysis since the point was unassigned before the first iteration. 

Now, we show that the probability that $p$ is unassigned after $i$ iterations is $\left( 1 - \frac{1}{k}\right)^{i}$, i.e., $\textbf{Pr} [A^{i}_{p} = 1] = \left( 1 - \frac{1}{k}\right)^{i}$, for every $i \in \{1,\dotsc,t\}$. We prove this statement using induction on $i$:\\

\noindent \textbf{Base Case: } For $i = 1$, the probability that $p$ is unassigned after the first iteration is: \[
\textbf{Pr}[A_{p}^{1} = 1] = 1 - \textbf{Pr}[A_{p}^{1} = 0] = 1 -  \frac{1}{k}
\]

\noindent \textbf{Induction Step:} For $i > 1$, the probability that $p$ is unassigned after $i$ iterations is:
\begin{align*}
    \textbf{Pr} [A^{i}_{p} = 1]
    &= \textbf{Pr}[A^{i}_{p} = 1 \mid A^{i-1}_{p} = 1] \cdot \textbf{Pr}[A^{i-1}_{p} = 1] + \textbf{Pr}[A^{i}_{p} = 1 \mid A^{i-1}_{p} = 0] \cdot \textbf{Pr}[A^{i-1}_{p} = 0], \\
    &\quad \textrm{(using the law of total probability)}
\end{align*}

\noindent Note that $\textbf{Pr}[A^{i}_{p} = 1 \mid A^{i-1}_{p} = 0] = 0$ since $p \notin P_{u}$ after $(i-1)^{th}$ iteration; therefore it does not participate in $i^{th}$ iteration. Hence we get,
\begin{align*}
    \textbf{Pr} [A^{i}_{p} = 1] &= \textbf{Pr}[A^{i}_{p} = 1 \mid A^{i-1}_{p} = 1] \cdot \textbf{Pr}[A^{i-1}_{p} = 1] &&\\
    &= \left( 1 - \textbf{Pr}[A^{i}_{p} = 0 \mid A^{i-1}_{p} = 1] \right) \cdot \textbf{Pr}[A^{i-1}_{p} = 1] &&\\
    &= \left( 1 - \frac{1}{k} \right) \cdot \textbf{Pr}[A^{i-1}_{p} = 1] , &&\quad \textrm{(using Equation~(\ref{equation:probability}))}\\
    &= \left( 1 - \frac{1}{k} \right) \cdot \left( 1 - \frac{1}{k} \right)^{i-1}, &&\quad \textrm{(using Induction Hypothesis)}  \\
    &= \left( 1 - \frac{1}{k} \right)^{i} \\
\end{align*}
\noindent This proves that $\textbf{Pr} [A^{i}_{p} = 1] = \left( 1 - \frac{1}{k}\right)^{i}$, for every $i \in \{1,\dotsc,t\}$. \\

Now, we evaluate the expected assignment cost for each group $P_{j} \in \{P_{1},\dotsc,P_{\ell}\}$. Let $\alpha_{p}$ denote the assignment cost of each point $p \in P_j$ in the fractional optimal solution. That is, $\alpha_{p} = \sum_{f \in F} x_{f,p} \cdot d(f,p)^{\zl} \cdot w_{j}(p)$ . For every point $p \in P_j$ and center $f \in F$, let $E_{f,p}$ denote the event that $p$ is assigned to $f$ during Phase $1$. Let $\ep{E}{i}_{f,p}$ denote the event that $p$ is assigned to $f$ in the $i^{th}$ iteration, during Phase $1$. If $p$ remains unassigned after Phase $1$, then let $\beta_{p}$ denote the cost of $p$ during Phase $2$. Then, the expected cost of group $P_{j}$ with respect to the center set $C'$ is:

\begin{align*}
    \mathbb{E}[\costt{C',P_j}] &= \sum_{p \in P_j} \mathbb{E}[\costt{C',p}] \quad (\textrm{using linearity of expectation})\\
    &= \sum_{p \in P_j} \left( \, \sum_{f \in F} \textbf{Pr}[E_{f,p}] \cdot d(f,p)^{\zl} \cdot w_{j}(p) + \textbf{Pr}[A^{t}_{p} = 1] \cdot \beta_{p} \right)\\
\end{align*}
\vspace{-1cm}
\begin{equation}\label{equation:2}
    \hspace{3.1cm} = \sum_{p \in P_j} \left( \, \sum_{f \in F} \sum_{i = 1}^{t} \textbf{Pr}[\ep{E}{i}_{f,p}] \cdot d(f,p)^{\zl} \cdot w_{j}(p) + \textbf{Pr}[A^{t}_{p} = 1] \cdot \beta_{p} \right)\\
\end{equation} \\

\noindent Next, we show that $\sum_{f \in F} \sum_{i = 1}^{t} \textbf{Pr}[\ep{E}{i}_{f,p}] \cdot d(f,p)^{\zl} \cdot w_{j}(p)  \leq \alpha_{p}$ for every point $p \in P_{j}$.

\begin{align*}
    \sum_{f \in F} \sum_{i = 1}^{t} \textbf{Pr}[\ep{E}{i}_{f,p}] \cdot d(f,p)^{\zl} \cdot w_{j}(p) &\leq \sum_{f \in F} \sum_{i = 1}^{t} \textbf{Pr}[\, \ep{E}{i}_{f,p} \mid A^{i-1}_{p} = 1] \cdot \textbf{Pr}[A^{i-1}_{p} = 1]\cdot d(f,p)^{\zl}  \cdot w_{j}(p)\\
    &= \sum_{f \in F} \sum_{i = 1}^{t} \textbf{Pr}[\, \ep{E}{i}_{f,p} \mid A^{i-1}_{p} = 1] \cdot \left( 1-\frac{1}{k} \right)^{i-1} \cdot d(f,p)^{\zl} \cdot w_{j}(p) \\
    &= \sum_{f \in F} \sum_{i = 1}^{t} \frac{x_{f,p}}{y_{f}} \cdot \frac{y_{f}}{k} \cdot \left( 1-\frac{1}{k} \right)^{i-1} \cdot d(f,p)^{\zl} \cdot w_{j}(p) \\
    &= \sum_{i = 1}^{t}  \frac{\sum_{f \in F} x_{f,p} \cdot d(f,p)^{\zl}\cdot w_{j}(p) }{k} \cdot \left( 1-\frac{1}{k} \right)^{i-1} \\
    &= \sum_{i = 1}^{t}  \frac{\alpha_{p}}{k} \cdot \left( 1-\frac{1}{k} \right)^{i-1} \\
    &= \frac{\alpha_{p}}{k} \cdot \frac{ 1 - \left( 1-\frac{1}{k} \right)^{t} }{1/k}  \leq \alpha_p\\
\end{align*}

\noindent This proves that $\sum_{f \in F} \sum_{i = 1}^{t} \textbf{Pr}[\ep{E}{i}_{f,p}] \cdot d(f,p)^{\zl} \cdot w_{j}(p)  \leq \alpha_{p}$. Now, substituting this inequality in Equation~(\ref{equation:2}), we get

\begin{align*}
    \mathbb{E}[\costt{C',P_j}] &\leq \sum_{p \in P_j} \alpha_p + \sum_{p \in P_j} \textbf{Pr}[A^{t}_{p} = 1] \cdot \beta_{p} &&\\
    &= \sum_{p \in P_j} \alpha_p + \sum_{p \in P_j} \left( 1 - \frac{1}{k} \right)^{t} \cdot \beta_{p} &&\\
    &= \sum_{p \in P_j} \alpha_p + \left( 1 - \frac{1}{k} \right)^{t} \cdot O(\ell) \cdot OPT && \quad \textrm{($\because$ $O(\ell)$-approximation algorithm used in Phase $2$)}\\
    &\leq \sum_{p \in P_j} \alpha_p + O(\ell) \cdot OPT \cdot  \frac{\veps}{2c \cdot n} && \quad \textrm{($\because t = k \ln (2c \cdot n/\veps)$)} \\
    &\leq \sum_{p \in P_j} \alpha_p + c' \cdot \ell \cdot OPT \cdot  \frac{\veps}{2c \cdot n} &&\quad (\textrm{for some constant $c'>0$}) \\
    &\leq \sum_{p \in P_j} \alpha_p + c' \cdot OPT \cdot  \frac{\veps}{2c } &&\quad (\textrm{$\because \ell \leq n$}) \\
    &= \sum_{p \in P_j} \alpha_p + \frac{\veps}{2} \cdot OPT &&\quad (\textrm{we choose $c$ such that $c = c'$}) \\
    &\leq \gamma + \frac{\veps}{2} \cdot OPT &&\quad (\textrm{using constraint (3) of the relaxed LP}) \\
    &\leq \left( 1+\frac{\veps}{2} \right) \cdot OPT &&\\
\end{align*}
This completes the proof of Lemma~\ref{lemma:bicriteria}.

\section{Probability Amplification (Proof of Lemma~\ref{lemma:amplification})}\label{sec:amplification}

Here, we show that repeating the algorithm: \RS, sufficient times gives a $(1+\veps)$-approximation to the socially fair clustering problem with high probability. Suppose \RS \, is repeated $r = \frac{8 \ln n}{\veps}$ times, independently. Let $C'_{1},\dotsc,C'_{r}$ be the obtained center sets for each call to the algorithm. Then, we show that the center set $C \coloneqq C_{1}' \cup \dotsc \cup C_{r}'$ is a $(1+\veps)$ approximation to the optimal fair clustering cost of $P$, i.e., $\fair{C,P} \leq (1+\veps) \cdot OPT$, with probability at least $1-1/n$. The proof is as follows:

We say that a group $P_j$ \emph{violates} the fairness bound with respect to some center set $C'_{i}$, if $\costt{C'_{i},P_j} > (1+\veps) \cdot OPT$.
Now, note that for every center set $C'_{i} \in \{C'_1,\dotsc,C'_{r}\}$ and group $P_{j} \in \{ P_1\dotsc,P_{\ell} \}$, we have that $\mathbb{E}[\costt{C'_{i},P_j}] \leq (1+\veps/2) \cdot OPT$, using Lemma~\ref{lemma:bicriteria}. Furthermore, using the Markov's inequality, we get the following probability bound:
\begin{align*}
\textbf{Pr} \left[ \, \costt{C_{i}',P_j} > (1+\veps) \cdot OPT \,\right] < \frac{1+\veps/2}{1+\veps} = 1 - \frac{\veps/2}{1+\veps} \leq 1 - \frac{\veps}{4}, \quad \textrm{for $\veps \leq 1$}\\
\end{align*}

\noindent In other words, $P_j$ violates the fairness bound with respect to $C_{i}^{'}$ with probability at most $1 - \veps/4$. Then, the probability that $P_j$ violates the fairness bound with respect to every center set $C_{i}' \in \{C_1',\dotsc,C_{r}'\}$ is:

\begin{align*}
\textbf{Pr} \left[ \, \forall i \in \{1,\dotsc,r\}, \, \costt{C_{i}',P_j} > (1+\veps) \cdot OPT \,\right] &< \left( 1 - \frac{\veps}{4} \right)^{r}, \quad && \textrm{($\because$ Independent events)}\\
&\leq \frac{1}{n^2}, \quad && \Big( \textrm{ $ \because r = \frac{8 \ln n}{\veps} $} \Big)
\end{align*}
Since $C \coloneqq C_{1}' \cup \dotsc \cup C_{r}'$, we get
\begin{align*}
\textbf{Pr} \left[ \, \costt{C,P_j} > (1+\veps) \cdot OPT \,\right] \leq \textbf{Pr} \left[ \, \forall i \in \{1,\dotsc,r\}, \, \costt{C_{i}',P_j} > (1+\veps) \cdot OPT \,\right] < \frac{1}{n^2}
\end{align*}

\noindent In other words, $P_j$ violates the fairness bound with respect to $C$ with probability at most $1/n^2$. Then, the probability that at least one of the groups in $\{P_{1},\dotsc,P_{\ell}\}$ violates the fairness bound with respect to $C$ is: 

\begin{align*}
\textbf{Pr} \left[ \, \exists P_{j} \in \{P_1,\dotsc,P_{\ell}\}, \, \costt{C,P_j} > (1+\veps) \cdot OPT \,\right] &<  \frac{\ell}{n^2}, \quad \textrm{(using union bound)} \\
& \leq 1/n \quad (\because \ell \leq n)
\end{align*}

\noindent Therefore, the probability that none of the groups in $\{P_{1},\dotsc,P_{\ell}\}$ violate the fairness bound with respect to $C$ is:

\begin{align*}
\textbf{Pr} \left[ \, \forall P_{j} \in \{P_1,\dotsc,P_{\ell}\}, \, \costt{C,P_j} \leq (1+\veps) \cdot OPT \,\right] &\geq  1 - \frac{1}{n} 
\end{align*}

\noindent This gives the following probability bound on the fair clustering cost of $P$:
\begin{align*}
\textbf{Pr} \left[ \fair{C,P} \leq (1+\veps) \cdot OPT \,\right] &=
\textbf{Pr} \left[ \, \max_{j \in [\ell]} \Big\{ \costt{C,P_j} \Big\} \leq (1+\veps) \cdot OPT  \,\right] \\
&= \textbf{Pr} \Big[ \, \forall P_{j} \in \{P_1,\dotsc,P_{\ell}\}, \, \costt{C,P_j} \leq (1+\veps) \cdot OPT \,\Big] \\
&\geq 1 - \frac{1}{n}
\end{align*}

\noindent This proves that $C$ is a $(1+\veps)$ approximation to the optimal fair clustering cost of $P$ with probability at least $1-1/n$. This completes the proof of Lemma~\ref{lemma:amplification}.
\section{$O(\ell)$ Approximation Algorithm}\label{appendix:ell_approx}

Let $\mathcal{I} = (\X,P,P_1,\dotsc,P_{\ell},w_1,\dotsc,w_{\ell},F,d,k,z)$ be any instance of the socially fair clustering problem. For a center set $C \subseteq F$, the \emph{unconstrained} clustering cost of $P$ is simply the sum of the clustering costs of every point, i.e., $
\costt{C,P} \coloneqq \sum_{j = 1}^{\ell} \sum_{p \in P_j} d(C,p)^{\zl} \cdot w_{j}(p)
$, where $d(C,p)$ denotes the distance of point $p$ to the closest center in $C$. Let $OPT_{u}$ denote the optimal unconstrained clustering cost of $P$. That is,
$$
OPT_{u} \equiv \min_{C\subseteq F \textrm{ and } |C| = k}  \Big\{ \costt{C,P}  \Big\}
$$

There are various polynomial time $O(1)$-approximation algorithms for the unconstrained clustering problem~\cite{kmedian:1999_charikar,kmedian:2002_naveen,kmedian:2013_Svensson}. We show that these algorithms give an $O(\ell)$ approximation for the socially fair clustering problem. Formally, we state this result as follows:

\begin{lemma}
Let $\mathcal{I} = (\X,P,P_1,\dotsc,P_{\ell},w_1,\dotsc,w_{\ell},F,d,k,z)$ be any instance of the socially fair clustering problem. Let $C$ be an $O(1)$-approximate solution to the unconstrained clustering cost of $P$, i.e., $\costt{C,P} = O(1) \cdot OPT_{u}$. Then, $C$ is also an $O(\ell)$-approximation to the fair clustering cost of $P$, i.e., $\fair{C,P} = O(\ell) \cdot OPT$.
\end{lemma}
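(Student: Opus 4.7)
The plan is to chain two simple inequalities: one relating the unconstrained optimum $OPT_u$ to the fair optimum $OPT$, and one relating $\fair{C,P}$ to $\costt{C,P}$. Both follow from the fact that the groups $P_1,\dotsc,P_\ell$ are disjoint (this normalization is made earlier in the paper), so $\costt{C,P} = \sum_{j=1}^{\ell} \costt{C,P_j}$ for every center set $C$.

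First I would upper bound $OPT_u$ by $\ell\cdot OPT$. Let $\Cst$ be an optimal fair solution, so $\costt{\Cst, P_j}\le OPT$ for every $j\in[\ell]$. Summing over the (disjoint) groups gives $\costt{\Cst, P} = \sum_{j=1}^{\ell}\costt{\Cst,P_j}\le \ell\cdot OPT$. Since $OPT_u$ is the minimum unconstrained cost over all $k$-subsets of $F$, we get $OPT_u\le \costt{\Cst,P}\le \ell\cdot OPT$.

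Next I would relate the fair cost of $C$ to its unconstrained cost. Since $\costt{C,P_j}\ge 0$ for every group, the maximum is bounded above by the sum, namely
\begin{equation*}
\fair{C,P} \;=\; \max_{j\in[\ell]}\costt{C,P_j} \;\le\; \sum_{j=1}^{\ell}\costt{C,P_j} \;=\; \costt{C,P}.
\end{equation*}
Combining these two steps with the hypothesis $\costt{C,P}=O(1)\cdot OPT_u$ yields $\fair{C,P}\le O(1)\cdot OPT_u\le O(1)\cdot \ell\cdot OPT = O(\ell)\cdot OPT$, which is the claim.

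There is no real obstacle here; the argument is a two-line manipulation once one notices the disjointness of the groups and the trivial $\max\le\text{sum}$ bound. The only thing worth being careful about is that the reduction to disjoint groups preserves both $\costt{\cdot,P}$ and $\fair{\cdot,P}$, which is exactly what was established in the discussion after the definition of the problem, so it can be invoked directly.
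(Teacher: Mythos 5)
Your proof is correct and follows essentially the same route as the paper's: bound $OPT_u \le \ell \cdot OPT$ by evaluating the unconstrained cost of the optimal fair center set and summing over the (disjoint) groups, then use $\max \le \textrm{sum}$ to get $\fair{C,P} \le \costt{C,P} = O(1)\cdot OPT_u = O(\ell)\cdot OPT$. No gaps; the explicit remark that disjointness of the groups is what justifies $\costt{C,P}=\sum_j \costt{C,P_j}$ is a nice touch that the paper uses implicitly.
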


\begin{proof}
Let $C^{*}$ be an optimal center set for the socially fair clustering objective, and $C^{*}_{u}$ be an optimal center set for the unconstrained clustering objective. Then, we show that $OPT_{u} \leq \ell \cdot OPT$ using the following sequence of inequalities:

\begin{equation}\label{eq:3}
OPT_{u} = \costt{C_{u}^{*},P} \leq \costt{C^{*},P} = \sum_{j = 1}^{\ell} \costt{C^{*},P_j} \leq \ell \cdot \max_{j \in [\ell]} \Big\{ \costt{C^{*},P_{j}} \Big\} = \ell \cdot OPT 
\end{equation}

\noindent Now, we show that $\fair{C,P} = O(\ell) \cdot OPT$ using the following sequence of inequalities:

$$\fair{C,P} = \max_{j \in [\ell]} \Big\{ \costt{C,P_j} \Big\} \leq \sum_{j = 1}^{\ell} \costt{C,P_j} = \costt{C,P} = O(1) \cdot OPT_{u} = O(\ell) \cdot OPT$$.

\noindent The last equality follows from Equation~(\ref{eq:3}). This proves the lemma. \qed
\end{proof}
Note that the above result is simply a generalization of Theorem 1 of~\cite{fairness:2020_Socially_Ghadiri}.

\section{Reduction: Set Coverage to k-Supplier}\label{appendix:reduction}

The set coverage problem is defined as follows.
\begin{definition}[Set Coverage]
Given an integer $k>0$, a set $U$, and a collection $\mathscr{C} = \{S_1,\dotsc,S_m\}$ of subsets of $U$, i.e., $S_j \subseteq U$ for every $j \in [m]$, determine if there exist $k$ sets in $\mathscr{C}$ that cover all elements in $U$.
\end{definition}

Now we describe the reduction. Given a set coverage instance $(U,\mathscr{C},k)$, we construct a $k$-supplier instance $(P,F,d,k)$ as follows. For every set $S_{i} \in \mathscr{C}$, we define a center $c_{i} \in F$. For every element $e \in U$, we define a point $x_{e} \in P$. Let us define the distance function $d(.,.)$ as follows. For any two points $x_{e},x_{e'} \in P$, or $c_{i},c_{j} \in F$, the distance $d(x_{e},x_{e'}) = d(c_{i},c_{j}) = 2$. For any point $x_{e} \in P$ and $c_{i} \in F$, if $e \notin S_{i}$, the distance $d(x_e,c_{i}) = 3$; otherwise $d(x_e,c_{i}) = 1$. Furthermore, assume that $d(.,.)$ is a symmetric function, i.e., $d(x,y) = d(y,x)$ for every $x,y \in P \cup F$. Also assume that $d(x,x) \geq 0$ for every $x \in P \cup F$. It is easy to see that $d(.,.)$ satisfies all the properties of a metric space.

Now, suppose that there exist $k$ sets: $S_{i_{1}},\dotsc,S_{i_{k}}$ in $\mathscr{C}$ that cover all elements of $U$, i.e., $S_{i_{1}} \cup \dotsc \cup S_{i_{k}} = U$, then the center set $C = \{c_{i_{1}},\dotsc,c_{i_{k}}\}$ gives the $k$-supplier cost $1$. On the other hand, if there does not exist any $k$ sets in $\mathscr{C}$ that could cover all elements of $U$, then for any center set $C \subseteq F$ of size $k$ there would exist a point $x \in P$ at a distance of $3$ from $C$, i.e., $d(C,x) = 3$. Therefore, the $k$-supplier cost would be $\rp{3}{\z}$. Since the set coverage problem is $\mathsf{NP}$-hard, it implies that the $k$-supplier problem can not be approximated to any factor better than $\rp{3}{\z}$, in polynomial time, assuming $\mathsf{P} \neq \mathsf{NP}$. Moreover, the following FPT hardness result is known for the set coverage problem (see Corollary 2 and Theorem 25 of~\cite{fpt:2020_Pasin_SODA}).

\begin{theorem}
For any function $g \colon \mathbb{R}^{+} \to \mathbb{R}^{+}$, there is no $g(k) \cdot n^{o(k)}$ time algorithm for the set coverage problem, assuming Gap-ETH.
\end{theorem}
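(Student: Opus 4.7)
The plan is to derive this lower bound by a parameter-preserving reduction from the $k$-Dominating Set problem, combined with the classical ETH-based lower bound of Chen, Huang, Kanj and Xia for $k$-Dominating Set. Since Gap-ETH is strictly stronger than ETH, any ETH-based lower bound automatically holds under Gap-ETH, which is enough for the present statement, since the statement concerns only exact solvability of set coverage rather than any approximation version.

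First, I would invoke the known parameterized complexity result that, assuming ETH, there is no $f(k) \cdot n^{o(k)}$-time algorithm deciding whether an $n$-vertex graph $G$ contains a dominating set of size $k$. Next, I would apply the standard reduction from $k$-Dominating Set to set coverage: given $(G=(V,E), k)$, build $(U,\mathscr{C},k)$ by setting $U = V$ and, for each $v \in V$, including in $\mathscr{C}$ the closed neighborhood of $v$, namely $\{v\} \cup \{u : \{u,v\} \in E\}$. A set $D \subseteq V$ dominates $G$ if and only if the collection of closed neighborhoods of vertices in $D$ covers all of $V$, so yes-instances and no-instances correspond bijectively. The reduction runs in polynomial time and preserves both the parameter $k$ and the instance size up to polynomial factors.

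Putting these two pieces together, a hypothetical $g(k) \cdot n^{o(k)}$-time algorithm for set coverage would yield a $g'(k) \cdot n^{o(k)}$-time algorithm for $k$-Dominating Set for some function $g'$, contradicting the ETH-based lower bound and hence also Gap-ETH. The main obstacle is not in the reduction itself, which is elementary, but in appealing to the correct parameterized source lower bound; this is a standard ingredient in the FPT hardness literature and is, in essence, what is captured by Corollary~2 and Theorem~25 of~\cite{fpt:2020_Pasin_SODA}. An alternative route that stays closer to Manurangsi's framework would derive the statement as a direct corollary of his Gap-ETH-based hardness for the gap version of maximum $k$-coverage: any FPT subpolynomial-exponent algorithm for exact set coverage would in particular distinguish the completeness case of his gap construction from the soundness case, contradicting Gap-ETH immediately.
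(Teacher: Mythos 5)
Your proof is correct, but it takes a genuinely different route from the paper: the paper does not prove this theorem at all, it simply imports it from Manurangsi (Corollary~2 and Theorem~25 of~\cite{fpt:2020_Pasin_SODA}), i.e., from the Gap-ETH-based inapproximability of maximum $k$-coverage, whose completeness/soundness gap in particular rules out exact $g(k)\cdot n^{o(k)}$-time algorithms for set coverage. You instead derive the statement from the classical ETH lower bound for $k$-Dominating Set (Chen--Huang--Kanj--Xia; see also the treatment in Cygan et al.) together with the standard parameter-preserving closed-neighborhood reduction to set coverage, and then observe that Gap-ETH implies ETH (an exact subexponential 3-SAT algorithm would in particular solve the gap version), so the conclusion holds a fortiori under Gap-ETH. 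Both arguments are sound, and your reduction and the bijection between dominating sets and covering subfamilies are stated correctly. What each approach buys: yours is more elementary and actually proves the theorem under the weaker hypothesis ETH, which also suffices for the paper's downstream use, since the reduction to $k$-supplier in Appendix~D only needs hardness of the exact decision problem (the $1$ versus $\rp{3}{\z}$ gap is created by the reduction itself); the paper's citation-based route keeps the entire lower-bound section under the single hypothesis Gap-ETH and would additionally yield hardness even for gap/approximate versions of coverage, which is more than is needed here. One small caution: avoid asserting that Gap-ETH is \emph{strictly} stronger than ETH (this is not known); all you need, and all that is true, is the implication Gap-ETH $\Rightarrow$ ETH.
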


\noindent This implies the following FPT hardness of approximation for the $k$-supplier problem.

\begin{theorem}\label{theorem:hardness_supplier2}
For any $\veps>0$ and any function $g \colon \mathbb{R}^{+} \to \mathbb{R}^{+}$, the $k$-supplier problem can not be approximated to factor $(\rp{3}{\z} - \veps)$ in time $g(k) \cdot n^{o(k)}$, assuming Gap-ETH.
\end{theorem}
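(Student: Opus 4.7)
The plan is to combine the reduction from set cover to $k$-supplier (already laid out in the paragraphs preceding the theorem) with the FPT lower bound on set coverage stated in the immediately preceding theorem. The reduction is parameter-preserving (it uses the same $k$), runs in polynomial time, and produces a $k$-supplier instance whose optimum has a clean two-valued gap: $1$ when the set coverage instance is a YES instance, and $3^{\z}$ when it is a NO instance. Any $(3^{\z}-\veps)$-approximation would separate these two cases, so it would decide set coverage.

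First, I would check the metric: every interpoint and intercenter distance is $2$, and every point-center distance is $1$ or $3$. The triangle inequality $d(a,b) \leq d(a,c) + d(c,b)$ is trivial for each triple since the right side is at least $2 + 1 = 3 \geq$ any left side. Symmetry and non-negativity are stipulated. Thus $(P\cup F,d)$ is a legitimate metric instance of $k$-supplier, and the reduction is polynomial in $|U|+|\mathscr{C}|$.

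Next I would verify the gap. If $S_{i_1},\dotsc,S_{i_k}$ cover $U$, then taking $C = \{c_{i_1},\dotsc,c_{i_k}\}$ puts every element-point at distance exactly $1$ from some chosen center, so the $k$-supplier cost equals $1^{\z} = 1$. Conversely, if no $k$ sets cover $U$, then for every $k$-sized $C \subseteq F$ there is some uncovered element $e$, whose corresponding point $x_e$ is at distance $3$ from every center in $C$, giving $k$-supplier cost $3^{\z}$. Hence an $\alpha$-approximation with $\alpha < 3^{\z}$ outputs a solution of value $< 3^{\z}$ exactly on YES instances and $\geq 3^{\z}$ on NO instances, thereby solving set coverage.

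Finally I would derive the FPT lower bound. Suppose, for contradiction, that for some $\veps>0$ and some function $g$ there is a $(3^{\z}-\veps)$-approximation algorithm for $k$-supplier running in time $g(k)\cdot n^{o(k)}$. Feeding it the above reduction (with $n = |P|+|F| = |U|+|\mathscr{C}|$) yields a set coverage decision algorithm in time $g(k)\cdot (|U|+|\mathscr{C}|)^{o(k)} \cdot \mathrm{poly}$, i.e., $g'(k)\cdot n^{o(k)}$ for the set coverage instance, contradicting the preceding theorem's Gap-ETH based lower bound for set coverage. The only real subtlety is confirming that the parameter $k$ passes through the reduction unchanged (it does, by construction) and that the asymptotic notation $n^{o(k)}$ survives the polynomial blowup of the reduction; both are routine.
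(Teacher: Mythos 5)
Your proposal matches the paper's argument: the same set-coverage-to-$k$-supplier reduction with the $1$ versus $\rp{3}{\z}$ gap, combined with the Gap-ETH hardness of set coverage, with the parameter $k$ preserved. The extra checks you include (metric verification, polynomial blowup, parameter preservation) are correct and only make explicit what the paper leaves implicit.
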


\end{document}